\pdfoutput=1
\RequirePackage{ifpdf}
\ifpdf 
\documentclass[pdftex]{sigma}
\else
\documentclass{sigma}
\fi

\usepackage{bm}
\numberwithin{equation}{section}

\newtheorem{Theorem}{Theorem}[section]
\newtheorem{Corollary}[Theorem]{Corollary}
\newtheorem{Lemma}[Theorem]{Lemma}
\newtheorem{Proposition}[Theorem]{Proposition}
 { \theoremstyle{definition}
\newtheorem{Definition}[Theorem]{Definition}
\newtheorem{Remark}[Theorem]{Remark} }

\begin{document}

\allowdisplaybreaks

\newcommand{\arXivNumber}{1808.06748}

\renewcommand{\PaperNumber}{009}

\FirstPageHeading

\ShortArticleName{On Reducible Degeneration of Hyperelliptic Curves and Soliton Solutions}

\ArticleName{On Reducible Degeneration of Hyperelliptic Curves\\ and Soliton Solutions}

\Author{Atsushi NAKAYASHIKI}

\AuthorNameForHeading{A.~Nakayashiki}

\Address{Department of Mathematics, Tsuda University, 2-1-1, Tsuda-Machi, Kodaira, Tokyo, Japan}
\Email{\href{mailto:atsushi@tsuda.ac.jp}{atsushi@tsuda.ac.jp}}

\ArticleDates{Received August 27, 2018, in final form January 29, 2019; Published online February 08, 2019}

\Abstract{In this paper we consider a reducible degeneration of a hyperelliptic curve of genus $g$. Using the Sato Grassmannian we show that the limits of hyperelliptic solutions of the KP-hierarchy exist and become soliton solutions of various types. We recover some results of Abenda who studied regular soliton solutions corresponding to a reducible rational curve obtained as a degeneration of a hyperelliptic curve. We study singular soliton solutions as well and clarify how the singularity structure of solutions is reflected in the matrices which determine soliton solutions.}

\Keywords{hyperelliptic curve; soliton solution; KP hierarchy; Sato Grassmannian}

\Classification{37K40; 37K10; 14H70}

\section{Introduction}

By the study of \cite{ChK1,ChK2,Kodama,KW} soliton solutions of the KP equation acquire a new aspect. Namely it is discovered that the shapes of soliton solutions are more various than what is known before and those shapes are classified by points of totally positive Grassmannians. This study relates soliton solutions to other areas of mathematics such as cluster algebras.

Then it is natural to ask what happens for quasi-periodic solutions. From this point of view it is important to study the connection of quasi-periodic solutions and soliton solutions, in other words, the degenerations of quasi-periodic solutions to soliton solutions. In papers \cite{Abenda, AbG1, AbG3,AbG2,AbG4} Abenda and Grinevich studied this problem. They constructed a singular rational curve and some divisor on it to each regular soliton solution studied in \cite{ChK1,ChK2, Kodama,KW}. It is noteworthy that their rational curves are reducible in general. It means that we need to consider reducible degenerations of algebraic curves in order
 to obtain a variety of soliton solutions.

In \cite{Abenda} Abenda studied a reducible rational curve which is obtained as a degeneration of a hyperelliptic curve and the corresponding soliton solutions as a concrete example of their theory. It should be noticed that in papers \cite{Abenda, AbG1, AbG3,AbG2,AbG4} soliton solutions and rational curves are directly related and that the limits of quasi-periodic solutions are not actually computed.

We began the study of degenerations of quasi-periodic solutions of the KP-hierarchy by the method of the Sato Grassmannian in \cite{N4}. In this approach it is possible to calculate the limits of quasi-periodic solutions without knowing the limits of periods of a Riemann surface.

In this paper we continue this study. We compute the limit of the $\tau$-function of the KP-hierarchy corresponding to a hyperelliptic curve when it degenerates to a reducible rational curve. From the view point of taking a limit of a solution there is no reason to restrict ourselves to regular solutions. So we consider singular solutions as well. We can see how the singularity structure of the solution is reflected in the matrix $A=(a_{i,j})$ (see Section~\ref{section3}) which determines a~soliton solution.

Consider the hyperelliptic curve $X$ of genus $g=n-1$ given by
\begin{gather*}
y^2=\prod_{j=1}^{2n}(x-\lambda_j).
\end{gather*}
We assume that $\lambda_i$'s are real and ordered as
\begin{gather*}
\lambda_1<\cdots<\lambda_n.
\end{gather*}
There are two points over $x=\infty$ on $X$ which are denoted by $\infty_\pm$. The solution corresponding to~$X$ is well known. It is constructed by the method of Baker--Akhiezer function of Krichever~\cite{Kr}. To construct the Baker--Akhiezer function we need to specify a base point $p_\infty$, a local coordinate~$z$ around $p_\infty$ and a general divisor of degree~$g$. We take $p_\infty=\infty_+$, $z=x^{-1}$. For each $0\leq m_0\leq g$ we consider a general divisor of the form
\begin{gather*}
D_g=p_1+\cdots+p_{m_0}+(g-m_0)\infty_+, \qquad p_j\neq \infty_+ \qquad \forall\, j.
\end{gather*}
The number $m_0$ specifies the partition of the Schur function which appears as the first term in the Schur function expansion of the $\tau$-function corresponding to $D_g$.

Let $k$ be an integer such that $0\leq k\leq m_0$. We assume that $p_1,\dots ,p_k$ is in a small neiborhood of $\infty_-$ and the remaining points are in a small neighborhood of $\infty_+$. The number $k$ specifies the type of soliton solutions in the limit.

We consider the degeneration of $X$ to the reducible curve given by
\begin{gather*}
y^2=\prod_{j=1}^n(x-\lambda_j)^2.
\end{gather*}
To take the limit of the corresponding solution of the KP-hierarchy we use the Sato Grassmannian. Using the Sato Grassmannian it is possible to write down the solution corresponding to~$X$ as a series with the coefficients in the polynomials of $\{\lambda_j\}$. Therefore the limit of the solution exists. By making an appropriate gauge transformation we identify this limit with a soliton solution. For regular solutions $m_0$ must be $n-1$. In this case the soliton solutions obtained here coincide with those in~\cite{Abenda}.

The paper is organized as follows. In Section~\ref{section2} we review the correspondence between solutions ($\tau$-functions) of the KP-hierarchy and points of the Sato Grassmannian. We recall $(n,k)$ solitons and the corresponding points of the Sato Grassmannian in Section~\ref{section3}. In Section~\ref{section4} we review how the data of algebraic curves are embedded in the Sato Grassmannian. In order to embed the data of $X$ to the Sato Grassmannian we need an explicit description of meromorphic functions on $X$ with a pole only at $\infty_+$. It is given in Section~\ref{section5}. We also compute the gap sequence at $\infty_+$ of the holomorphic line bundle of degree $0$ corresponding to the divisor \mbox{$D_g-g\infty_+$}. The top term of the Schur function expansion of the solution is determined by using it. In Section~\ref{section6} we recall the description of the tau function corresponding to $D_g$ in terms of Riemann's theta function. The limit of the frame of the Sato Grassmannian corresponding to~$D_g$ is determined in Section~\ref{section7}. We show that it is gauge equivalent to the frame of an $(n,k+1)$ soliton. Finally we give the explicit formula of the limits of the tau function and the adjoint wave function (dual Baker--Akhiezer function) in Section~\ref{section8}.

\section{Sato Grassmannian}\label{section2}
\subsection{KP-hierarchy}
We set
\begin{gather*}
[w]={\vphantom{\bigg(}}^t\left(w,\frac{w^2}{2},\frac{w^3}{3},\dots \right).
\end{gather*}
In this paper the KP-hierarchy signifies the following equation \cite{DJKM} for the function
$\tau(x)$ of $x={}^t(x_1,x_2,x_3,\dots )$:
\begin{gather}
\oint {\rm e}^{-2\sum\limits_{j=1}^\infty y_j\lambda^j}\tau\big(x-y-\big[\lambda^{-1}\big]\big)\tau\big(x+y+\big[\lambda^{-1}\big]\big)\frac{{\rm d}\lambda}{2\pi {\rm i}}=0,\label{KP-hierarchy}
\end{gather}
where $y={}^t(y_1,y_2,y_3,\dots )$ and the integral means taking the coefficient of $\lambda^{-1}$ in the series expansion of the integrand in $\lambda$.

If we set $u=2\partial_{x_1}^2\log \tau(x)$, it satisfies the KP equation
\begin{gather}
3u_{x_2x_2}+(-4u_{x_3}+6uu_{x_1}+u_{x_1x_1x_1})_{x_1}=0.\label{KP-equation}
\end{gather}

\subsection{Sato Grassmanian}
The set of formal power series solutions of the KP-hierarchy is parametrized by the Sato Grassmannian which we denote by UGM~\cite{S,SN} (see also \cite{KNTY, Mul}). Let us briefly recall the definition and the fundamental properties of UGM.

Let $V={\mathbb C}((z))$ be the vector space of formal Laurent series in the variable $z$ and $V_\phi={\mathbb C}\big[z^{-1}\big]$, $V_0=z{\mathbb C}[[z]]$ subspaces of $V$. Then we have
\begin{gather*}
V=V_\phi \oplus V_0, \qquad V/V_0\simeq V_\phi.
\end{gather*}
Let $\pi\colon V\rightarrow V_\phi$ be the projection map. Then UGM is the set of subspaces $U$ of $V$ which satisfy
\begin{gather*}
 \dim \operatorname{Ker}(\pi|_U)=\dim \operatorname{Coker}(\pi|_U)<\infty.
\end{gather*}

A basis of $U$ is called a frame of $U$. We express a frame of $U$ by an infinite matrix as follows. Set
\begin{gather*}
e_i=z^{i+1},\qquad i\in {\mathbb Z},
\end{gather*}
and write an element $f$ of $V$ as
\begin{gather*}
f=\sum_{i\in {\mathbb Z}} \xi_i e_i.
\end{gather*}

We associate the infinite column vector $(\xi_i)_{i\in{\mathbb Z}}$ to $f$. Then a frame of $U$ is given by a matrix $\xi=(\xi_{i,j})_{i\in {\mathbb Z},j\in {\mathbb N}}$ which is written as
\begin{gather*}
\xi=\left(\begin{matrix}
\quad&\vdots&\vdots\\
\cdots&\xi_{-2,2}&\xi_{-2,1}\\
\cdots&\xi_{-1,2}&\xi_{-1,1}\\
---&---&---\\
\cdots&\xi_{0,2}&\xi_{0,1}\\
\cdots&\xi_{1,2}&\xi_{1,1}\\
\quad&\vdots&\vdots\\
\end{matrix}\right).
\end{gather*}

For a point $U$ of UGM there exists a frame $\xi=(\xi_{i,j})_{i\in {\mathbb Z},j\in {\mathbb N}}$ satisfying the following conditions: there exists a non-negative integer $l$ such that
\begin{gather}
\xi_{i,j}=
\begin{cases}
1&\text{if $j>l$ and $i=-j$},\\
0&\text{if ($j>l$ and $i<-j$) or ($j\leq l$ and $i<-l$)}.
\end{cases} \label{frame-cond}
\end{gather}

It means that $X$ is of the form
\begin{gather*}
\xi=\left[\begin{matrix}
\ddots&\quad&O&\quad\\
\cdots&1&\quad&\quad\\
\cdots&\ast&1&\quad\\
\cdots&\ast&\ast&B\\
\end{matrix}
\right],
\end{gather*}
where $B$ is an $\infty\times l$ matrix of rank $l$ and its first row is placed at the $-l$th row of $\xi$. Conversely a matrix of this form becomes a frame of a point of UGM. In the following a frame of a point of UGM is always assumed to satisfy the condition~(\ref{frame-cond}) unless otherwise stated.

Here we introduce the notion of Maya diagram. A Maya diagram of charge $p$ is a sequence of integers $M=(m_1,m_2,\dots )$ such that $m_1>m_2>\cdots$ and, for some $l$, $m_i=-i+p$, $i\geq l$ holds. In this paper we consider only a Maya diagram of charge $0$ and call them simply a Maya diagram.

With each Maya diagram $M$ we can associate the partition $\lambda$ by
\begin{gather*}
\lambda=(m_1+1,m_2+2,\dots ).
\end{gather*}
This gives a one to one correspondence between the set of Maya diagrams and the set of partitions.

Let $\lambda=(\lambda_1,\dots ,\lambda_l)$ be an arbitrary partition and $M=(m_1,m_2,m_3,\dots )$ the corresponding Maya diagram. The Pl\"ucker coordinate $\xi_\lambda$ or $\xi_M$ of a frame $\xi$ is defined by
\begin{gather*}
\xi_\lambda=\xi_M=\det(\xi_{m_i,j})_{i,j\in {\mathbb N}}.
\end{gather*}
We introduce the Schur function $s_\lambda(x)$ of the variable $x={}^t(x_1,x_2,\dots )$ by
\begin{gather*}
s_\lambda(x)=\det(p_{\lambda_i-i+j}(x))_{1\leq i, j\leq l},\qquad
\exp\left(\sum_{i=1}^\infty x_i\lambda^i\right)=\sum_{i=0}^\infty p_i(x) \lambda^i.
\end{gather*}

Then we define the tau function corresponding to a frame $\xi$ of a point of UGM by
\begin{gather*}
\tau(x;\xi)=\sum_\lambda \xi_\lambda s_\lambda(x),
\end{gather*}
where the summation is taken over all partitions.

For a given point of UGM a frame $\xi$ of it satisfying the condition (\ref{frame-cond}) is not unique. If $\xi$ is replaced by another frame the tau function is multiplied by a non-zero constant.

\begin{Theorem}[\cite{SS}]\label{Sato1} For a frame $\xi$ of a point of UGM $\tau(x;\xi)$ is a solution of the KP-hierarchy. Conversely for any formal power series solution $\tau(x)$ of the KP-hierarchy there exists a unique point $U$ of ${\rm UGM}$ and a frame $\xi$ of $U$ such that $\tau(x)=\tau(x;\xi)$.
\end{Theorem}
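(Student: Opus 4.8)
The plan is to prove Theorem~\ref{Sato1} through the boson--fermion correspondence, which is the method of Sato \cite{SS}. First I would set up the Clifford algebra generated by free fermions $\psi_j$, $\psi_j^*$, $j\in\mathbb{Z}$, with $\{\psi_i,\psi_j^*\}=\delta_{ij}$ and all other anticommutators zero, acting on the charge-zero fermionic Fock space $\mathcal{F}_0$. Charge-zero normally ordered monomials applied to the vacuum are indexed by Maya diagrams of charge $0$, hence by partitions, giving a distinguished basis $\{|\lambda\rangle\}$ of $\mathcal{F}_0$; the boson--fermion correspondence is the linear isomorphism $\sigma\colon\mathcal{F}_0\to\mathbb{C}[x_1,x_2,\dots]$ determined by $\sigma(|\lambda\rangle)=s_\lambda(x)$, so that $\sigma^{-1}(\tau(x;\xi))=\sum_\lambda\xi_\lambda|\lambda\rangle$. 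The crucial translation to establish is that, after applying $\sigma^{-1}\otimes\sigma^{-1}$ (with the appropriate charge shifts into the sectors $\mathcal{F}_{\pm1}$), the bilinear relation~(\ref{KP-hierarchy}) for a pair of series is \emph{equivalent} to the fermionic identity $\sum_{j\in\mathbb{Z}}\psi_j|v\rangle\otimes\psi_j^*|v\rangle=0$ for $|v\rangle=\sigma^{-1}(\tau)$; this reduces to bosonizing the fields $\psi(\lambda)=\sum_j\psi_j\lambda^j$, $\psi^*(\lambda)=\sum_j\psi_j^*\lambda^{-j}$ as vertex operators acting by ${\rm e}^{\pm\sum x_i\lambda^i}$ together with the shifts $x\mapsto x\mp[\lambda^{-1}]$, and recognizing the contour integral in~(\ref{KP-hierarchy}) as the residue pairing $\oint\frac{{\rm d}\lambda}{2\pi{\rm i}}\,\psi(\lambda)\otimes\psi^*(\lambda)$.

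Second, I would invoke the characterization of solutions of the fermionic bilinear relation: a nonzero $|v\rangle\in\mathcal{F}_0$ satisfies $\sum_j\psi_j|v\rangle\otimes\psi_j^*|v\rangle=0$ if and only if $|v\rangle$ lies in the $GL_\infty$-orbit of the vacuum, equivalently its coefficients in the basis $\{|\lambda\rangle\}$ satisfy the Pl\"ucker relations, equivalently $|v\rangle$ is a decomposable semi-infinite wedge. Such $|v\rangle$, up to a scalar, correspond bijectively to points of UGM: a frame $\xi$ of $U\in\mathrm{UGM}$ gives the wedge $\xi_{\cdot,1}\wedge\xi_{\cdot,2}\wedge\cdots$ whose coefficient of $|\lambda\rangle$ is by definition the Pl\"ucker coordinate $\xi_\lambda=\xi_M$, the normalization~(\ref{frame-cond}) being exactly what makes each $\xi_\lambda$ a well-defined finite determinant and the wedge a well-defined element of $\mathcal{F}_0$; conversely a decomposable $|v\rangle$ is the wedge of a basis of a uniquely determined subspace $U$ with $\dim\operatorname{Ker}(\pi|_U)=\dim\operatorname{Coker}(\pi|_U)<\infty$. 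Combining with the first step, for a frame $\xi$ of a point of UGM the vector $\sigma^{-1}(\tau(x;\xi))$ is decomposable, hence satisfies the fermionic relation, hence $\tau(x;\xi)$ solves~(\ref{KP-hierarchy}); this is the first assertion. (The forward direction alone also admits a more elementary proof, expanding $\tau(x;\xi)\tau(x';\xi)$ directly and using the Pl\"ucker relations among the Schur functions, but the converse really needs the structure above.)

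Third, for the converse let $\tau(x)$ be any formal power series solution of~(\ref{KP-hierarchy}). Since the Schur functions form a basis, there are unique scalars $c_\lambda$ with $\tau(x)=\sum_\lambda c_\lambda s_\lambda(x)$; set $|v\rangle=\sigma^{-1}(\tau)=\sum_\lambda c_\lambda|\lambda\rangle$. By the first step $|v\rangle$ satisfies the fermionic bilinear relation, so by the second step the $c_\lambda$ obey the Pl\"ucker relations and are the Pl\"ucker coordinates of a well-defined point $U\in\mathrm{UGM}$; this $U$ is unique because its Pl\"ucker coordinates are forced by $\tau$ and conversely determine the point. Finally, choosing a frame $\xi$ of $U$ in the normal form~(\ref{frame-cond}) and fixing the residual overall scalar by requiring the coefficient of the empty partition to be $c_\emptyset$ (whence $\xi_\lambda=c_\lambda$ for all $\lambda$) gives $\tau(x;\xi)=\sum_\lambda\xi_\lambda s_\lambda(x)=\tau(x)$.

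The main obstacle is the second step together with the bosonization half of the first: one must construct the semi-infinite wedge space and the boson--fermion correspondence carefully enough that (i) the operator $\sum_j\psi_j\otimes\psi_j^*$ genuinely corresponds, term by term in a convergent expansion, to the formal residue in~(\ref{KP-hierarchy}), and (ii) the Pl\"ucker-relation characterization of decomposable vectors matches exactly the finiteness condition $\dim\operatorname{Ker}(\pi|_U)=\dim\operatorname{Coker}(\pi|_U)<\infty$ defining UGM and the convergence of the wedge under~(\ref{frame-cond}). These are precisely the technical points of \cite{SS} (see also \cite{KNTY, Mul, SN}), which I would cite rather than reprove; the three paragraphs above are the bookkeeping that assembles them into the statement.
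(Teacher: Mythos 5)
Your proposal is a correct outline of the standard argument via the boson--fermion correspondence and the Pl\"ucker-relation characterization of decomposable vectors, which is exactly the approach of the cited source \cite{SS}; the paper itself offers no proof of Theorem~\ref{Sato1}, only the citation. The one small imprecision is normalizing the frame by the coefficient of the empty partition, which may vanish --- you should instead match any one nonzero Pl\"ucker coordinate, after which the single overall scalar freedom forces $\xi_\lambda=c_\lambda$ for all $\lambda$.
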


\section[$(n,k)$ solitons]{$\boldsymbol{(n,k)}$ solitons}\label{section3}
In this section we recall the results on $(n,k)$ solitons (see \cite{Kodama} for more details).

For a positive integer $N$ and a nonnegative integer $N'$ we use the following notation:
\begin{gather*}
[N]=\{1,\dots ,N\},
\qquad \binom{[N]}{N'}=\big\{(i_1,\dots ,i_{N'})\in [N]^{N'}\,|\, i_1<\cdots<i_{N'}\big\}.
\end{gather*}

Let $n$, $k$ be positive integers which satisfy $n\geq k$, $A=(a_{ij})$ be an $n\times k$ matrix of rank $k$ and $\lambda_1,\dots ,\lambda_n$ non-zero complex numbers.

For $I=(i_1,\dots ,i_k) \in \binom{[n]}{k}$ we set
\begin{gather*}
A_I=\det(a_{i_p,q})_{1\leq p,q\leq k},\qquad
\Delta_I(\lambda_1,\dots ,\lambda_n)=\prod_{p<q}(\lambda_{i_q}-\lambda_{i_p}).
\end{gather*}
Then
\begin{gather}
\tau(x)=\sum_{I\in \binom{[n]}{k}} \Delta_I(\lambda_1,\dots ,\lambda_n)A_I \exp\left(\sum_{i\in I}\eta_i\right),\qquad \eta_i=\sum_{j=1}^\infty x_j\lambda_i^j\label{nk-soliton}
\end{gather}
becomes a solution of the KP-hierarchy \cite{FN,Satsuma2}. It is called the $(n,k)$ soliton associated with the data $(A,\{\lambda_j\})$ or the $(n,k)$ soliton associated with $A$ if $\{\lambda_j\}$ are fixed.

The $(n,k)$ soliton (\ref{nk-soliton}) can be written in the form of Wronskian. Let
\begin{gather*}
S_j=\sum_{i=1}^n a_{ij} \exp(\eta_i).
\end{gather*}
Then
\begin{gather*}
\tau(x)=\operatorname{Wr}(S_1,\dots ,S_k)=\det\big(S^{(i-1)}_j\big)_{1\leq i,j\leq k}, \qquad S^{(i)}=\frac{\partial ^i S}{\partial x_1^i}.
\end{gather*}

\begin{Remark} In the case $n=k$, $\tau(x)=C\exp\Big(\sum\limits_{i=1}^\infty d_i x_i\Big)$ for some constants $C$, $d_i$. It is a trivial solution of~(\ref{KP-hierarchy}) which is obtained from the constant solution by a~gauge transformation. We include this case for the sake of convenience to describe the limits of the quasi-periodic solutions later.
\end{Remark}

The point of UGM corresponding to an $(n,k)$ soliton is determined by Sato~\cite{S}. We consider the function $1/(1-\lambda_i z)$ as a power series in $z$ by
\begin{gather*}
\frac{1}{1-\lambda_i z}=\sum_{r=0}^\infty \lambda_i^r z^r.
\end{gather*}
Then

\begin{Theorem}[\cite{S}]\label{S-frame}
The point of UGM corresponding to the $(n,k)$ soliton associated with $(A,\{\lambda_j\})$ is given by the following frame:
\begin{gather}
z^{-(k-1)}\sum_{i=1}^n \frac{a_{ij}}{1-\lambda_i z}, \qquad j\in [k], \qquad z^{-j}, \qquad j\geq k.\label{soliton-frame}
\end{gather}
\end{Theorem}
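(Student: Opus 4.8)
The plan is to verify directly that the $\tau$-function built from the frame \eqref{soliton-frame} via the Schur expansion $\tau(x;\xi)=\sum_\lambda\xi_\lambda s_\lambda(x)$ coincides with the $(n,k)$ soliton \eqref{nk-soliton}, appealing to Theorem~\ref{Sato1} for uniqueness of the point of UGM attached to a given solution. First I would check that the proposed frame actually defines a point of UGM: the columns $z^{-j}$ for $j\geq k$ already satisfy \eqref{frame-cond} with $l=k-1$, while the first $k$ columns $f_j:=z^{-(k-1)}\sum_{i=1}^n a_{ij}/(1-\lambda_i z)$ lie in $z^{-(k-1)}\mathbb{C}[[z]]$, and the leading behaviour near $z=0$ shows that, after row reduction against the $z^{-j}$ ($j\geq k$), the matrix is of the required triangular-plus-full-rank-block form; the rank-$k$ hypothesis on $A=(a_{ij})$ guarantees the block $B$ has rank $k-1+1=k$ in the appropriate sense. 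This fixes $l=k-1$, so the top partition has at most $k$ parts.

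The core computation is to evaluate the Pl\"ucker coordinates $\xi_\lambda$ of this frame. Because columns $j\geq k$ are the standard basis vectors $z^{-j}$, any nonzero maximal minor must use exactly those columns outside a fixed window, so only the first $k$ columns $f_1,\dots,f_k$ contribute nontrivially; the surviving Maya diagrams are those whose associated partition $\lambda=(\lambda_1,\dots,\lambda_k)$ has exactly $k$ parts with $\lambda_i\geq 0$. For such $\lambda$, $\xi_\lambda$ is the $k\times k$ determinant formed from the coefficients of $z^{\lambda_i-i+(k-1)}$, i.e.\ of $z^{\,m_i}$ with $m_i=\lambda_i-i$ shifted appropriately, in the expansions $f_j=\sum_{r\geq 0}\big(\sum_i a_{ij}\lambda_i^r\big)z^{r-(k-1)}$. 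Thus the $(i,j)$ entry is $\sum_{s=1}^n a_{sj}\lambda_s^{\,\lambda_i-i+k-1}$, and the Cauchy--Binet formula expands this determinant over $I=(i_1<\cdots<i_k)\in\binom{[n]}{k}$ as
\begin{gather*}
\xi_\lambda=\sum_{I\in\binom{[n]}{k}} A_I\,\det\big(\lambda_{i_p}^{\,\lambda_q-q+k-1}\big)_{1\leq p,q\leq k}.
\end{gather*}
The second determinant is a generalized Vandermonde, equal to $\Delta_I(\lambda_1,\dots,\lambda_n)\,s_\lambda(\lambda_{i_1},\dots,\lambda_{i_k})$ by the bialternant (Jacobi--Trudi) formula for the Schur polynomial.

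Plugging this into $\tau(x;\xi)=\sum_\lambda \xi_\lambda s_\lambda(x)$ and interchanging the (finite, once truncated) sums gives
\begin{gather*}
\tau(x;\xi)=\sum_{I\in\binom{[n]}{k}}\Delta_I(\lambda_1,\dots,\lambda_n)\,A_I\sum_{\lambda}s_\lambda(\lambda_{i_1},\dots,\lambda_{i_k})\,s_\lambda(x),
\end{gather*}
where $\lambda$ runs over partitions with at most $k$ parts. The inner sum is summed by the Cauchy identity for Schur functions, $\sum_\lambda s_\lambda(t)s_\lambda(x)=\prod_{p,i}(1-t_p x_i)^{-1}$ in the appropriate formal sense; after matching the variable conventions (here the ``$x$'' of the Schur functions encodes the KP times through $p_i(x)$, and $\exp(\sum_j x_j\mu^j)$ is the relevant generating function) this product collapses to $\exp\big(\sum_{p=1}^k\eta_{i_p}\big)=\exp\big(\sum_{i\in I}\eta_i\big)$. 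Comparing with \eqref{nk-soliton} identifies $\tau(x;\xi)$ with the $(n,k)$ soliton up to the usual overall nonzero constant, and Theorem~\ref{Sato1} then identifies the points of UGM. The main obstacle I expect is bookkeeping: pinning down the exact row/column shifts so that the exponent $\lambda_i-i+k-1$ comes out right, and translating the Cauchy identity through the identification of Schur arguments with the power sums $p_i(x)$ so that the product genuinely becomes $\exp(\sum_{i\in I}\eta_i)$ rather than a reindexed variant; these are routine but must be done carefully to get the normalization and the range of summation over $\lambda$ exactly matching the finite sum in \eqref{nk-soliton}.
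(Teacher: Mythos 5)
The paper offers no proof of Theorem~\ref{S-frame}: it is quoted verbatim from Sato's lecture notes \cite{S}, so there is no internal argument to compare yours against. Judged on its own, your plan is the standard derivation and it goes through. The key steps are all sound: only Maya diagrams with $m_i=-i$ for $i\geq k+1$ survive because the columns $z^{-j}$, $j\geq k$, are standard basis vectors; the surviving Pl\"ucker coordinates are $k\times k$ minors of the matrix $\big(\sum_{s}a_{sj}\lambda_s^{\lambda_i-i+k}\big)$; Cauchy--Binet turns each into $\sum_I A_I\det\big(\lambda_{i_q}^{\lambda_p-p+k}\big)$; the bialternant identifies the second factor with $(-1)^{\binom{k}{2}}\Delta_I\, s_\lambda(\lambda_{i_1},\dots,\lambda_{i_k})$ (the uniform sign is absorbed into the overall constant ambiguity of the frame); and the Cauchy identity in the power-sum normalization $\sum_\lambda s_\lambda(t_1,\dots,t_k)s_\lambda(x)=\exp\big(\sum_m\big(\sum_p t_p^m\big)x_m\big)$ collapses the inner sum to $\exp\big(\sum_{i\in I}\eta_i\big)$, with the restriction to partitions of at most $k$ parts being automatic since $s_\lambda$ in $k$ variables vanishes otherwise. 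The only corrections are the bookkeeping items you yourself flagged: with the paper's convention $e_i=z^{i+1}$ the frame has $l=k$ (not $k-1$), since the first $k$ columns start at $e_{-k}=z^{-(k-1)}$, and consequently the exponent in the minor entries is $\lambda_i-i+k$ rather than $\lambda_i-i+k-1$ (for $\lambda=\varnothing$ this gives the exponents $k-1,\dots,0$ of the genuine Vandermonde, as it must). The interchange of the sums over $\lambda$ and $I$ is harmless as an identity of formal power series, since each graded piece of $\tau$ receives contributions from only finitely many $\lambda$.
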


\section{Algebraic curves and UGM}\label{section4}
It is possible to embed certain set of data of algebraic curves to the Sato Grassmannian (see \cite{KNTY,Mul,SW} and the references therein). We restrict ourselves to the sepecial case which is relevant to us.

Let $X$ be a compact Riemann surface of genus $g$, $p_\infty$ a point of $X$, $z$ a local coordinate of~$X$ around~$p_\infty$, $L$ a holomorphic line bundle of degree $g-1$ and $\phi$ a local trivialization of~$L$ around~$p_\infty$. We define a map
\begin{gather*}
\iota\colon \ H^0(X,L(\ast p_\infty))\longrightarrow V
\end{gather*}
as follows. Take an element $s$ of $H^0(X,L(\ast p_\infty))$. Using $\phi$ the section $s$ can be considered as a~meromorphic function on some neighborhood of $p_\infty$. Therefore it is possible to expand it in~$z$ as
\begin{gather*}
\phi(s)=\sum_{n=-\infty}^{+\infty} s_nz^n.
\end{gather*}
Define
\begin{gather*}
\iota(s)=\sum_{n=-\infty}^{+\infty}s_n e_n=\sum_{n=-\infty}^{+\infty} s_nz^{n+1}.
\end{gather*}

Then

\begin{Theorem}[\cite{KNTY,Mul,SW}] The image of $\iota$ belongs to ${\rm UGM}$.
\end{Theorem}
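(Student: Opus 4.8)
The plan is to show that $U := \iota\bigl(H^0(X,L(\ast p_\infty))\bigr)$ is a subspace of $V$ on which the projection $\pi\colon V\to V_\phi$ has finite-dimensional kernel and cokernel of equal dimension. First I would verify that $U$ is a linear subspace: $\iota$ is manifestly $\mathbb{C}$-linear, so $U$ is the image of a vector space under a linear map. Next I would filter $H^0(X,L(\ast p_\infty))$ by the order of pole allowed at $p_\infty$. Write $L(mp_\infty)$ for the line bundle obtained from $L$ by allowing poles of order at most $m$ at $p_\infty$; then $H^0(X,L(\ast p_\infty)) = \bigcup_{m\geq 0} H^0(X,L(mp_\infty))$, and under $\iota$ a section with a pole of order exactly $m$ at $p_\infty$ maps to an element of $V$ whose lowest-order term is $z^{-m+1}$ (using $e_n = z^{n+1}$, so a Laurent series starting at $z^{-m}$ maps to one starting at $z^{-m+1}$).

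The heart of the argument is a dimension count via Riemann--Roch. Since $\deg L = g-1$, for $m$ large we have $\deg L(mp_\infty) = g-1+m > 2g-2$, so $H^1(X,L(mp_\infty)) = 0$ and $\dim H^0(X,L(mp_\infty)) = m$ by Riemann--Roch. I would then argue that $\pi|_U$ has finite-dimensional kernel: $f\in U$ lies in $\operatorname{Ker}(\pi|_U)$ iff the corresponding element of $V$ lies in $V_0 = z\mathbb{C}[[z]]$, i.e.\ iff the underlying section $s$ of $L(\ast p_\infty)$ actually lies in $H^0(X,L(-p_\infty))$ (it vanishes at $p_\infty$ after trivialization, hence extends to a section of $L\otimes\mathcal{O}(-p_\infty)$ of degree $g-2$). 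This space is finite-dimensional, so $\dim\operatorname{Ker}(\pi|_U) = \dim H^0(X,L(-p_\infty)) =: a < \infty$. For the cokernel, I would show $\pi(U)$ has finite codimension in $V_\phi = \mathbb{C}[z^{-1}]$ by producing, for each sufficiently large $m$, an element of $U$ whose image under $\pi$ has leading term a nonzero multiple of $z^{-m}$: this follows because $\dim H^0(X,L(mp_\infty)) - \dim H^0(X,L((m-1)p_\infty)) = 1$ for all $m$ large (again Riemann--Roch, using $H^1 = 0$), so the "gap sequence" at $p_\infty$ is eventually full, and hence $\pi(U)$ contains $z^{-m}\mathbb{C}[z^{-1}]$-worth of leading terms for all large $m$; concretely $\operatorname{Coker}(\pi|_U)$ is detected only by finitely many small powers $z^{-j}$.

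Finally I would nail the equality $\dim\operatorname{Ker}(\pi|_U) = \dim\operatorname{Coker}(\pi|_U)$ by a clean exact-sequence / index argument rather than computing both sides separately. Consider the truncation $\pi_m\colon H^0(X,L(mp_\infty))\to V_\phi$ composed with $\iota$; on the finite level, $\operatorname{Ker}$ is $H^0(X,L(-p_\infty))$ and the image misses exactly a space complementary to the span of the achieved leading terms $z^{-j}$, $j\le m-1$. The difference $\dim\operatorname{Coker} - \dim\operatorname{Ker}$ equals $m - \dim H^0(X,L(mp_\infty)) + \bigl(\dim H^0(X,L(-p_\infty)) - \dim\ker\bigr)$, and Riemann--Roch $\dim H^0(X,L(mp_\infty)) - \dim H^1 = m$ together with the vanishing $H^1 = 0$ for $m$ large forces this difference to be $0$. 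I expect the main obstacle to be the bookkeeping: carefully relating the pole order of a section at $p_\infty$ (after trivializing by $\phi$) to the index shift in $e_n = z^{n+1}$, and checking that the Riemann--Roch computation of successive dimension jumps indeed translates into $\pi(U)$ having finite codimension equal to $\dim\operatorname{Ker}(\pi|_U)$ — i.e.\ that the Weierstrass gaps of $L$ at $p_\infty$ account precisely for the discrepancy. Everything else is a direct unwinding of definitions.
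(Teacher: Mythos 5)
The paper does not prove this statement; it is quoted from the references \cite{KNTY,Mul,SW}, so there is no internal proof to compare against. Your proposal reconstructs the standard argument from those sources (filtration by pole order plus Riemann--Roch at the critical degree $g-1$), and the strategy is the right one. One concrete correction is needed, precisely in the bookkeeping you flag as the delicate point: since $e_n=z^{n+1}$, the map $\iota$ sends $\phi(s)$ to $z\,\phi(s)$, so $\iota(s)\in V_0=z{\mathbb C}[[z]]$ if and only if $\phi(s)$ is holomorphic at $p_\infty$ --- no vanishing is required. Hence $\operatorname{Ker}(\pi|_U)\simeq H^0(X,L)$, not $H^0(X,L(-p_\infty))$ as you wrote (note this is inconsistent with your own correct observation that a series starting at $z^{-m}$ maps to one starting at $z^{-m+1}$). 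With this fixed the count closes cleanly: $\deg L=g-1$ gives $\chi(L)=0$, so $\dim\operatorname{Ker}(\pi|_U)=h^0(L)$; on the other side, $\pi(\iota(s))$ is $z$ times the polar part of $\phi(s)$, so $\dim\operatorname{Coker}(\pi|_U)$ equals the number of pole orders $m\geq 1$ that are \emph{not} realized by sections of $L(\ast p_\infty)$, which is $\sum_{m\geq 1}\bigl(1-(h^0(L(mp_\infty))-h^0(L((m-1)p_\infty)))\bigr)=m-\bigl(h^0(L(mp_\infty))-h^0(L)\bigr)=h^0(L)$ for $m\gg 0$, using $h^0(L(mp_\infty))=m$ for $m\geq g$. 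This is exactly the equality you want, and it also repairs your final index formula, which as written has the two correction terms cancelling for the wrong reason. You should also note that $\iota$ is injective (a nonzero section has nonzero Laurent expansion), so identifying $\operatorname{Ker}(\pi|_U)$ with a space of sections is legitimate.
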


Let us interpret this theorem in terms of dvisors and meromorphic functions.

Let $m_0$ be an integer satisfying $0\leq m_0\leq g$, $p_j$, $j\in [m_0]$, points of $X$ such that $p_j\neq p_\infty$ for any $j$, $D=p_1+\cdots+p_{m_0}+(g-1-m_0)p_\infty$ the divisor of degree $g-1$ and $L$ the holomorphic line bundle corresponding to~$D$. Then $L\simeq {\cal O}(D)$ as a sheaf of ${\cal O}$-modules. Using this isomorphism and the local coordinate $z$ we can consider a~local section of $L$ near $p_\infty$ as a meromorphic function on some neighborhood of $p_\infty$. It gives a local trivialization of $L$ around $p_\infty$. So let us examine how this isomorphism looks like.

Let $I$ be a finite index set which contains the symbol $\infty$, $\{W_i\,|\,i\in I\}$ an open covering of~$X$ such that each $W_i$ is a domain of a local coordinate system of $X$ and contains at most one~$p_j$ and~$d_i$ a meromorphic function on $W_i$ whose divisor is $D$ in $W_i$. We assume that $W_\infty$ contains~$p_\infty$. We can take $d_\infty=z^{g-1-m_0}$. Then $d_{jk}=d_j/d_k$ defines a transition function of the line bundle $L$. Let $W$ be an open set and $\{(s_j,W_j)\}$ a local holomorphic section of $L$ over $W$. It means that, if $W\cap W_j\cap W_k$ is not empty, $s_j=d_{jk}s_k$ on $W\cap W_j\cap W_k$. Then $s_j/d_j=s_k/d_k$ on $W\cap W_j\cap W_k$. Therefore $f=\{(s_j/d_j,W_j)\}$ defines a meromorphic function on $W$ whose divisor $(f)$ satisfies $(f)+D\geq 0$. This is the map from $L$ to ${\cal O}(D)$.

Let us look at the neighborhood $W_\infty$ of $p_\infty$. A local section $s$ of $L$ on $W_\infty$ is mapped to the meromorphic function $s/z^{g-1-m_0}$ on $W_\infty$. Conversely a local meromorphic function $f$ on $W_\infty$ which belongs to ${\cal O}(D)$ corresponds to the local holomorphic section $s=z^{g-1-m_0}f$ of $L$.

We have the composition of maps:
\begin{gather*}
\tilde{\iota}\colon \ H^0(X,{\cal O}(D+\ast p_\infty)) \longrightarrow H^0(X,L(\ast p_\infty)) \longrightarrow V,
\end{gather*}
where the first map is that induced from ${\cal O}(D)\simeq L$ and the second map is $\iota$. Using the description of the isomorphism ${\cal O}(D)\simeq L$ explained above $\tilde{\iota}$ is given as follows.

Let us take a meromorphic function $f\in H^0(X,{\cal O}(D+\ast p_\infty))$ and expand it in $z$ around $p_\infty$ as
\begin{gather*}
f=\sum f_n z^n.
\end{gather*}
Then
\begin{gather*}
\tilde{\iota}(f)=\iota\left(z^{g-1-m_0}\sum f_n z^{n}\right)= \sum f_n z^{n+g-m_0}=z^{g-m_0}f.
\end{gather*}

\begin{Corollary}\label{image-tilde-iota}
The subspace ${\tilde{\iota}\big(H^0(X,{\cal O}(D+\ast p_\infty))\big)}$ belongs to ${\rm UGM}$.
\end{Corollary}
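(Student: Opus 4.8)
The plan is to reduce the statement to the preceding theorem (the one asserting that $\iota\big(H^0(X,L(\ast p_\infty))\big)$ belongs to UGM) by tracking through the isomorphism ${\cal O}(D)\simeq L$ that has just been spelled out. First I would recall the setup: $D=p_1+\cdots+p_{m_0}+(g-1-m_0)p_\infty$ has degree $g-1$, and $L$ is the associated holomorphic line bundle, equipped with the local trivialization $\phi$ near $p_\infty$ coming from $d_\infty=z^{g-1-m_0}$. The key computation, already displayed in the excerpt, is that for $f\in H^0(X,{\cal O}(D+\ast p_\infty))$ with Laurent expansion $f=\sum f_n z^n$ around $p_\infty$ one has $\tilde\iota(f)=z^{g-m_0}f=\sum f_n z^{n+g-m_0}$. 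In other words $\tilde\iota$ is simply multiplication by $z^{g-m_0}$ followed by the tautological inclusion of Laurent series into $V$.

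Next I would observe that $\tilde\iota$ factors as $\iota\circ m$, where $m\colon H^0(X,{\cal O}(D+\ast p_\infty))\to H^0(X,L(\ast p_\infty))$ is the isomorphism of vector spaces induced by ${\cal O}(D)\simeq L$ (sending $f$ to the section $z^{g-1-m_0}f$ near $p_\infty$), and $\iota$ is the map of the previous theorem. Since $m$ is an isomorphism of ${\mathbb C}$-vector spaces, we get
\begin{gather*}
\tilde\iota\big(H^0(X,{\cal O}(D+\ast p_\infty))\big)=\iota\big(m\big(H^0(X,{\cal O}(D+\ast p_\infty))\big)\big)=\iota\big(H^0(X,L(\ast p_\infty))\big).
\end{gather*}
By the theorem cited just above, the right-hand side is a point of UGM, so the corollary follows immediately. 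There is essentially no obstacle here beyond making sure the bookkeeping of exponents is consistent: one must check that the local trivialization implicitly used in the statement of the earlier theorem is exactly the one determined by $d_\infty=z^{g-1-m_0}$, which is precisely what the paragraph preceding the corollary establishes.

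Alternatively, if one prefers a self-contained argument not invoking the isomorphism abstractly, one can verify the UGM conditions directly: writing $U=\tilde\iota\big(H^0(X,{\cal O}(D+\ast p_\infty))\big)$, the kernel of $\pi|_U$ consists of the images of functions $f$ with $(f)+D+\ast p_\infty\geq 0$ that are in fact holomorphic at $p_\infty$ after multiplication by $z^{g-m_0}$, i.e.\ of $H^0(X,{\cal O}(D))$, whose dimension is finite and computed by Riemann--Roch as $g-1-(g-1)+h^1=h^0(K-D)$; similarly the cokernel is finite-dimensional, and the two dimensions agree by Riemann--Roch since $\deg D=g-1$ forces $\chi({\cal O}(D))=0$. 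The only mildly delicate point in either route is confirming that the shift by $z^{g-m_0}$ (rather than some other power) is the correct one so that the index conditions match the normalization of $V=V_\phi\oplus V_0$; this is exactly the content of the displayed computation $\tilde\iota(f)=z^{g-m_0}f$, so I would simply cite that and conclude.
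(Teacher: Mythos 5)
Your main argument is exactly the paper's (implicit) reasoning: the corollary is stated without proof precisely because $\tilde{\iota}$ is by construction the composition of the isomorphism $H^0(X,{\cal O}(D+\ast p_\infty))\simeq H^0(X,L(\ast p_\infty))$ with $\iota$, so its image coincides with the image of $\iota$, which lies in UGM by the preceding theorem. The proposal is correct and takes essentially the same approach; the Riemann--Roch alternative is a fine sanity check but not needed.
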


\section{Hyperelliptic curves and functions on them}\label{section5}
Let $X$ be the hyperelliptic curve of genus $g=n-1$ defined by
\begin{gather}
y^2=\prod_{j=1}^{2n}(x-\lambda_j),\label{hyperelliptic}
\end{gather}
where $\{\lambda_j\}$ are mutually distinct non-zero complex numbers. It can be compactified by adding two points over $x=\infty$ which we denote by $\infty_\pm$. We take $z=1/x$ as a local coordinate around~$\infty_\pm$. We distinguish $\infty_+$ and $\infty_-$ by the expansion of $y$:
\begin{gather*}
y=\pm z^{-n} (1+O(z) ) \qquad \text{at $\infty_\pm$.}
\end{gather*}
We denote by $\sigma$ the involution of $X$ defined by $\sigma(x,y)=(x,-y)$.

Let
\begin{gather}
p_1+\cdots+p_g, \qquad p_j\in X,\label{general-divisor}
\end{gather}
be a general divisor. It is known that (\ref{general-divisor}) is a general divisor if and only if $p_i\neq\sigma(p_j)$ for any $i\neq j$ (see \cite{Fay} for example). Let
\begin{gather*}
D=p_1+\cdots+p_g-\infty_+
\end{gather*}
the divisor of degree $g-1$.

 It can be written as
\begin{gather}
D=p_1+\cdots+p_{m_0}+(g-m_0-1)\infty_+, \qquad p_j\neq \infty_+, \qquad j\in[m_0].\label{D-2}
\end{gather}
for some $0\leq m_0\leq g$. Since (\ref{general-divisor}) is a general divisor,
\begin{gather}
p_i\neq \sigma(p_j) \qquad i\neq j, \label{cond-general}\\
p_j\neq \infty_-, \qquad j\in[m_0] \quad \text{if $m_0<g$}.\nonumber
\end{gather}

For simplicity we assume that $p_1,\dots ,p_{m_0}$ are mutually distinct and different from $\infty_-$.

Let us find a basis of $H^0(X,{\cal O}(D+\ast\infty_+))$. To this end we first study the case of $m_0=0$, that is, the case $D=(g-1)\infty_+$. In this case
\begin{gather*}
H^0(X,{\cal O}(D+\ast\infty_+))=H^0(X,{\cal O}(\ast \infty_+)),
\end{gather*}
where the right hand side is the space of meromorphic functions on $X$ which are holomorphic on $X\backslash \{\infty_+\}$. A basis of this space can be given as follows.

It can be easily proved that the space of meromorphic functions on $X$ which are holomorphic on $X\backslash\{\infty_+,\infty_- \}$ is equal to the space of polynomials in $x$ and $y$. Let us write the expansion of~$y$ at~$\infty_\pm$ as
\begin{gather}
y=\pm z^{-n}\left(\prod_{j=1}^{2n}(1-\lambda_j z)\right)^{1/2}=\pm z^{-n}\sum_{j=0}^\infty \alpha_j z^j, \qquad \alpha_0=1.\label{y-expand}
\end{gather}
For $m\geq n$ define polynomials $g_m(x)$ and $f_m(x,y)$ by
\begin{gather*}
g_m(x)=\sum_{j=0}^{m}\alpha_j x^{m-j}, \qquad f_m(x,y)=\frac{1}{2}\big(x^{m-n}y+g_m(x)\big).
\end{gather*}
Since, at $\infty_\pm$,
\begin{gather*}
g_m(x)=z^{-m}\sum_{j=0}^m \alpha_j z^j,
\end{gather*}
we have
\begin{gather}
f_m=\begin{cases}
z^{-m}(1+O(z))&\text{at $\infty_+$},\\
O(z)& \text{at $\infty_-$}.
\end{cases} \label{exp-fm}
\end{gather}
This means that, for $m\geq n$, $f_m$ is a meromorphic function on $X$ with a pole only at $\infty_+$ and the order of a pole is $m$.

Here we recall the notion of gaps. Let $M$ be a holomorphic line bundle of degree zero, $p$~a~point of $X$ and $m$~a non-negative integer. If there is no meromorphic section of $M$ with a~pole of order $m$ at $p$ and with no other poles, then $m$ is called a gap of~$M$ at~$p$. If $m$ is not a~gap then it is called a non-gap of~$M$ at~$p$. There are exactly $g$ gaps for any~$M$ and~$p$ by \cite[Lemma~1]{N3}. If the set of gaps of the trivial line bundle at $p$ is not $[g]$, then $p$ is called a Weierstrass point. It is known that the Weierstrass points of the hyperelliptic curve $X$ are branch points $(\lambda_j,0)$, $j\in [2g]$. In particular $\infty_\pm$ are not Weierstrass points.

\begin{Lemma} The functions $\{1, f_m, m\geq n\}$ is a basis of the vector space $H^0(X,{\cal O}(\ast \infty_+))$.
\end{Lemma}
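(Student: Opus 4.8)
The plan is to show that $\{1\}\cup\{f_m : m\geq n\}$ spans $H^0(X,\mathcal{O}(\ast\infty_+))$ and is linearly independent, by exploiting the order-of-pole data already recorded in~\eqref{exp-fm}.

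First I would record the non-gap structure. By~\eqref{exp-fm} each $f_m$ ($m\geq n$) is holomorphic on $X\setminus\{\infty_+\}$ with a pole of order exactly $m$ at $\infty_+$, and the constant function $1$ contributes order $0$. Hence the set $\{0\}\cup\{m : m\geq n\}=\{0,n,n+1,n+2,\dots\}$ consists of non-gaps of the trivial line bundle at $\infty_+$. Since $g=n-1$, the complement $\{1,2,\dots,n-1\}$ has exactly $g$ elements, and by the remark preceding the lemma (or directly, since a hyperelliptic curve has exactly $g$ gaps at any point, with $\infty_+$ not a Weierstrass point so the gaps are precisely $[g]$) these are exactly the gaps. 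Therefore for every non-gap $m$ there is, up to scalar, a unique meromorphic function with a pole of order $m$ at $\infty_+$ and no other pole, and our list $\{1,f_n,f_{n+1},\dots\}$ realizes one such function for each non-gap value.

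Linear independence is then immediate: a nontrivial linear combination $c_0\cdot 1+\sum_{m\geq n}c_m f_m$ has a well-defined pole order at $\infty_+$ equal to the largest index with $c_m\neq 0$ (respectively $0$ if only $c_0\neq0$), because the leading coefficients in~\eqref{exp-fm} are all nonzero; so the combination cannot vanish unless all $c_m=0$. For spanning, take any $f\in H^0(X,\mathcal{O}(\ast\infty_+))$ with pole of order $N$ at $\infty_+$. Since $N$ must be a non-gap, either $N=0$, in which case $f$ is constant, or $N\geq n$, in which case subtracting a suitable scalar multiple of $f_N$ cancels the $z^{-N}$ term; the difference has a strictly smaller pole order, again necessarily $0$ or $\geq n$, and we induct on $N$ (the induction terminates because pole orders are bounded below). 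This expresses $f$ as a finite linear combination of $1$ and the $f_m$'s.

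The only genuinely substantive point is knowing that the gap set at $\infty_+$ is exactly $[g]=\{1,\dots,n-1\}$, equivalently that $n,n+1,\dots$ together with $0$ exhaust the non-gaps; but this is guaranteed by the general fact (cited as \cite[Lemma~1]{N3}, together with the observation that $\infty_+$ is not a Weierstrass point of $X$) that there are exactly $g$ gaps. Once that is in hand, the counting argument above closes the proof without further computation. I expect the main obstacle to be purely expository: making sure the pole-order bookkeeping and the semigroup property of non-gaps are stated cleanly, since the analytic content is entirely contained in~\eqref{exp-fm}.
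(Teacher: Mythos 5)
Your proposal is correct and follows the same route as the paper's own proof: both rest on the fact that $\infty_+$ is not a Weierstrass point, so the gaps there are exactly $[g]=\{1,\dots,n-1\}$, together with the pole-order data of \eqref{exp-fm}; you merely spell out the spanning induction and the leading-coefficient argument for independence that the paper leaves implicit. (One harmless overstatement: a function with a pole of exact order $m$ at $\infty_+$ and no other poles is \emph{not} unique up to scalar, since one may add functions of lower pole order; but you never actually use that claim.)
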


\begin{proof}Since $\infty_+$ is not a Weierstrass point, the gaps at $\infty_+$ is $1,2,\dots ,g$. Thus $H^0(X,{\cal O}(\ast \infty_+))$ is generated, as a vector space, by $f_m$, $m\geq n=g+1$ and~$1$. The linear independence follows from the expansion (\ref{exp-fm}) at~$\infty_+$.
\end{proof}

Next we consider the general case (\ref{D-2}) with $m_0$ not necessarily equal to zero. Since $p_i\neq \infty_\pm$, we can write
\begin{gather*}
p_i=(c_i,y_i),
\end{gather*}
for some $c_i\in {\mathbb C}$. We assume that $c_i$ does not depend on $\{\lambda_j\}$ for any $i$. In particular $c_i\neq \lambda_j$, $i,j\in [2n]$. Since $\{p_i\}$ are mutually distinct and satisfy~(\ref{cond-general}), $\{c_i\}$ are mutually distinct. In the following we assume further $c_j\neq 0$, $j\in [m_0]$.

For $j\in [m_0]$ define
\begin{gather}
h_j=\frac{f_n(x,y)-f_n(c_j,-y_j)}{x-c_j}=\frac{y+g_n(x)-(-y_j+g_n(c_j))}{2(x-c_j)}.\label{hj}
\end{gather}
It is a meromorphic function on $X$ with the pole divisor $p_j+(n-1)\infty_+$.

\begin{Lemma}\label{fm-hj} The functions $\{1, f_m, m\geq n, h_j , j\in [m_0]\}$ is a basis of $H^0 (X,{\cal O}(D+\ast\infty_+) )$
\end{Lemma}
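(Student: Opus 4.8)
The plan is to show that the proposed set spans $H^0(X,{\cal O}(D+\ast\infty_+))$ and is linearly independent, both of which follow from analyzing the behavior of the functions at the points of the divisor $D$ and at $\infty_+$. First I would verify that each proposed function indeed lies in the space: the functions $1$ and $f_m$, $m\geq n$, are already known (by the previous lemma) to be holomorphic on $X\setminus\{\infty_+\}$, hence a fortiori lie in $H^0(X,{\cal O}(D+\ast\infty_+))$ since $D\geq -\infty_+$ has its positive part away from $\infty_+$; wait, more carefully, $D=p_1+\cdots+p_{m_0}+(g-m_0-1)\infty_+$, so a function in $H^0(X,{\cal O}(D+\ast\infty_+))$ is allowed a simple pole at each $p_j$ and an arbitrary pole at $\infty_+$, and must be holomorphic elsewhere. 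The functions $1,f_m$ have no pole at any $p_j$, so they qualify. For $h_j$, formula~\eqref{hj} exhibits it as a ratio whose only poles come from the zero of $x-c_j$ (a simple pole at $p_j$ and at $\sigma(p_j)$ a priori) and from the pole of $f_n$ at $\infty_+$; but the numerator $y+g_n(x)-(-y_j+g_n(c_j))$ vanishes at $\sigma(p_j)=(c_j,-y_j)$ since there $y=-y_j$, so the apparent pole at $\sigma(p_j)$ cancels, leaving pole divisor exactly $p_j+(n-1)\infty_+$ as stated. Thus all proposed functions lie in the space.

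Next I would count. By the Riemann--Roch theorem, $\dim H^0(X,{\cal O}(D+m\infty_+))$ for $m$ large grows by $1$ with each increment of $m$, and the full space $H^0(X,{\cal O}(D+\ast\infty_+))$ is the union over $m$ of these. The key is to determine the gap sequence at $\infty_+$ of the line bundle ${\cal O}(D)$ (equivalently of ${\cal O}(D-(g-1)\infty_+)$ shifted appropriately), which by \cite[Lemma~1]{N3} has exactly $g$ gaps. The functions $h_j$ contribute poles of order $n-1=g$ at $\infty_+$ (with the added simple pole at $p_j$), while $1$ contributes order $0$ and $f_m$ contributes order $m\geq n=g+1$ at $\infty_+$; I would argue that together with the $m_0$ functions $h_j$, the non-gap orders at $\infty_+$ realized inside $H^0(X,{\cal O}(D+\ast\infty_+))$ are precisely $\{0\}\cup\{g\text{ (with multiplicity }m_0\text{ via the }h_j)\}\cup\{m:m\geq g+1\}$, and a dimension comparison via Riemann--Roch (the degree of $D+m\infty_+$ is $g-1+m$, so for $m\geq g+1$ the dimension is $g-1+m-g+1 = m = $ number of listed basis functions with pole order $\leq m$) confirms we have the right count. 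So the spanning statement reduces to showing these functions are linearly independent.

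For linear independence, suppose a finite linear combination $c_0\cdot 1 + \sum_{m\geq n}b_m f_m + \sum_{j\in[m_0]} d_j h_j = 0$. I would first expand at $\infty_+$ using $f_m = z^{-m}(1+O(z))$ and the expansion of $h_j$ at $\infty_+$ (which has a pole of order exactly $n-1$); looking at the pole of highest order forces the top $b_m$ to vanish one at a time, reducing to $c_0 + \sum_{j}d_j h_j = 0$. Then I would evaluate the residue (or the principal part of the simple pole) at each $p_j$: since $h_j$ has a simple pole at $p_j$ and the other $h_i$ ($i\neq j$) as well as $1$ and the remaining $f_m$ are regular at $p_j$ (here using $c_i\neq c_j$ for $i\neq j$, and $p_j\neq\infty_+$), this forces $d_j=0$ for each $j$, and finally $c_0=0$. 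The main obstacle I expect is the careful bookkeeping of the gap sequence at $\infty_+$ for the bundle ${\cal O}(D)$ when $0<m_0<g$ — one must confirm that introducing the points $p_1,\dots,p_{m_0}$ does not create unexpected extra non-gaps below order $g$ (this is where the generality condition \eqref{cond-general}, $p_i\neq\sigma(p_j)$, and $\infty_+$ not being a Weierstrass point are essential), so that the count of basis elements with pole order $\leq m$ at $\infty_+$ matches the Riemann--Roch dimension exactly for all $m$.
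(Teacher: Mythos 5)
Your proposal is correct, but it routes the counting differently from the paper, so a comparison is worthwhile. The paper's proof twists by the degree-zero line bundle $M\simeq{\cal O}(p_1+\cdots+p_{m_0}-m_0\infty_+)$: it notes that $1$ and the $h_j$ give $m_0+1$ independent elements of $H^0(X,M((m_0+g)\infty_+))$, whose dimension is at most $m_0+g+1$ because $\deg M=0$, that the $f_m$ realize every pole order $m_0+m$ with $m\geq g+1$, and then invokes the fact that a degree-zero bundle has exactly $g$ gaps at any point \cite[Lemma~1]{N3} to conclude nothing is missing; this packaging is reused immediately afterwards to compute the gap sequence in Proposition~\ref{gap-sequence}. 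You instead exhaust the space by $V_m=H^0(X,{\cal O}(D+m\infty_+))$, get $\dim V_m=m$ for $m\geq g+1$ from Riemann--Roch (the residual degree $g-1-m$ is negative, so $h^1=0$), match this against the number of proposed functions lying in $V_m$, and prove independence by hand via pole orders at $\infty_+$ and residues at the $p_j$; this is more self-contained (no appeal to the external gap lemma), and your explicit check that the pole divisor of $h_j$ is exactly $p_j+(n-1)\infty_+$ (cancellation at $\sigma(p_j)$) is a detail the paper asserts after \eqref{hj} without proof. Two bookkeeping points, neither fatal. First, the proposed functions lying in $V_m$ are those with pole order at $\infty_+$ at most $g-m_0-1+m$ (not at most $m$), and the count is $1+m_0+\bigl((g-m_0-1+m)-(g+1)+1\bigr)=m$, as needed; your phrase about functions ``with pole order $\leq m$'' gives the right total only when $m_0=g-1$. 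Second, the realized non-gaps are not ``$g$ with multiplicity $m_0$'': after the triangular change of basis of Lemma~\ref{basis-change} the $\tilde h_i$ have the distinct pole orders $g,g-1,\dots,g-m_0+1$ at $\infty_+$; but your argument never actually uses that phrasing, and the ``obstacle'' you flag at the end (unexpected extra non-gaps) is already excluded by your own count, since $m$ independent elements in an exactly $m$-dimensional space leave no room --- the generality of $D$ enters only through the distinctness of the $c_i$ and $c_i\neq\lambda_j$ (hence $y_i\neq 0$), which is precisely what your residue step needs.
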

\begin{proof} Let $M$ be the holomorphic line bundle of degree zero corresponding to the divisor $p_1+\cdots+p_{m_0}-m_0 \infty_+$,
\begin{gather}
M\simeq {\cal O}(p_1+\cdots+p_{m_0}-m_0 \infty_+).\label{M}
\end{gather}
Then we have
\begin{gather}
M(m\infty_+)\simeq{\cal O}(p_1+\cdots+p_{m_0}+(m-m_0 )\infty_+),\label{flat-M}\\
H^0(X, M(\ast \infty_+))\simeq H^0(X, {\cal O}(D+\ast\infty_+)).\label{MD-correspondence}
\end{gather}
We identify the left hand side of (\ref{MD-correspondence}) with the right hand side of (\ref{MD-correspondence}). Then $1$ and $h_j$, $ j\in [m_0]$, belong to $H^0(X, M((m_0+g)\infty_+))$. Since $c_1,\dots ,c_{m_0}$ are mutually distinct, the set of functions $\{1, h_j, j\in [m_0]\}$ is linearly independent and it spans an $m_0+1$ dimensional subspace of $H^0(X, M((m_0+g)\infty_+))$. Since the degree of $M$ is zero
\begin{gather*}
\dim H^0(X, M((m_0+g)\infty_+))\leq m_0+g+1,
\end{gather*}
Notice that, for $m\geq g+1$,
\begin{gather*}
f_m\in H^0(X, M((m_0+m)\infty_+)), \qquad f_m\notin H^0(X, M((m_0+m-1)\infty_+)).
\end{gather*}
Therefore there are at most $m_0+g+1-(m_0+1)=g$ gaps in $H^0(X, M(\ast \infty_+))$. Since there are exactly $g$ gaps by \cite[Lemma~1]{N3}, we can conclude that $\{1,h_j, j\in [m_0]\}$ is a basis of $H^0(X,M((m_0+g)\infty_+))$. It then shows that $\{1, h_j, j\in [m_0], f_m, m\geq g+1\}$ is a basis of $H^0(X, M(\ast \infty_+))$.
\end{proof}

Let us determine the gap sequence of $M$ defined by (\ref{M}) at $\infty_+$. By (\ref{flat-M}) a meromorphic function from $H^0(X,{\cal O}(D+\ast\infty_+))$ with a pole of order $r$ at $\infty_{+}$ is identified with a meromorphic section of $M$ with a pole of order $r+m_0$ at $\infty_+$. We prove

\begin{Proposition}\label{gap-sequence} The gap sequence of $M$ at $\infty_+$ is $ (0,1,\dots ,m_0-1,m_0+1,\dots ,g)$.
 \end{Proposition}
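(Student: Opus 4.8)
The plan is to exhibit, for each non-negative integer $r$, an explicit meromorphic section of $M$ with a pole of order exactly $r$ at $\infty_+$ whenever one exists, and to count dimensions to rule out the missing value. By the isomorphism (\ref{flat-M}), a section of $M$ with a pole of order $r$ at $\infty_+$ corresponds to a function in $H^0(X,{\cal O}(D+\ast\infty_+))$ with a pole of order $r-m_0$ at $\infty_+$; equivalently (using the identification already set up in the proof of Lemma~\ref{fm-hj}) I work directly inside $H^0(X,M(\ast\infty_+))$, whose basis $\{1,h_j\ (j\in[m_0]),\ f_m\ (m\geq g+1)\}$ was just determined. So the task reduces to reading off, from this basis, the pole orders at $\infty_+$ that actually occur.

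First I record the pole orders of the basis elements, as sections of $M$. The constant function $1$, viewed in $H^0(X,M(\ast\infty_+))$, has pole order $0$ at $\infty_+$ (it lies in $H^0(X,M(m_0\infty_+))$ but, since $\deg M=0$ and $m_0>0$ in the relevant case, it does not lie in $H^0(X,M((m_0-1)\infty_+))$ unless... — here I must be a little careful: the statement of the proof of Lemma~\ref{fm-hj} already gives $f_m\in H^0(X,M((m_0+m)\infty_+))$ but not in the one lower, so $f_m$ contributes pole order $m_0+m$ for $m\geq g+1$, i.e.\ pole orders $m_0+g+1, m_0+g+2,\dots$). Each $h_j$ has pole divisor $p_j+(n-1)\infty_+$ on $X$, hence as a section of $M$ it has a pole of order $(n-1)+m_0 = g+m_0$ at $\infty_+$; and since all $c_j$ are distinct, generic linear combinations of $\{1,h_1,\dots,h_{m_0}\}$ achieve pole order exactly $g+m_0$ there, while suitable combinations realize every value in the span. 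Concretely, $1$ realizes pole order $0$, and $\{1,h_1,\dots,h_{m_0}\}$ together span an $(m_0+1)$-dimensional space inside $H^0(X,M((m_0+g)\infty_+))$; because the $h_j$ all have the same pole order $g+m_0$ at $\infty_+$ but the differences $h_i-h_j$ drop it, the set of pole orders realized by this span is exactly $\{0, g-m_0+1, g-m_0+2,\dots, g+m_0\}$ — no, this needs the finer analysis of the next paragraph.

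The cleanest route is the dimension count already implicit in Lemma~\ref{fm-hj}. For each integer $N\geq 0$ let $\ell(N)=\dim H^0(X,M(N\infty_+))$; then $N$ is a non-gap of $M$ at $\infty_+$ precisely when $\ell(N)=\ell(N-1)+1$, i.e.\ when there is a section whose pole order at $\infty_+$ is exactly $N$. From the basis in Lemma~\ref{fm-hj} one reads $\ell(N)=1$ for $0\leq N\leq g+m_0-1$ except one jump, $\ell(g+m_0)=m_0+1$... — more precisely: $1\in H^0(X,M(0\cdot\infty_+))$ gives $\ell(0)\geq 1$, and since $\deg M=0$ this forces $M\simeq{\cal O}$ only if... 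Let me instead argue as follows. Since $M$ has degree $0$ and admits the global section $1$, in fact $M\simeq{\cal O}_X$; hence $H^0(X,M(N\infty_+))=H^0(X,{\cal O}(N\infty_+))$. Because $\infty_+$ is not a Weierstrass point of $X$ (stated in Section~\ref{section5}), the gap sequence of ${\cal O}_X$ at $\infty_+$ is $(1,2,\dots,g)$, so $\ell(0)=1$, $\ell(N)=N-g+1$ for $N\geq g$, and the non-gaps are $0,g+1,g+2,\dots$. But this is the gap sequence of the \emph{trivial} bundle, not of $M$ with its line-bundle structure twisted by the divisor $p_1+\dots+p_{m_0}-m_0\infty_+$: although $M\simeq{\cal O}_X$ abstractly, the \emph{local trivialization at $\infty_+$} coming from (\ref{flat-M}) differs from the tautological one of ${\cal O}_X$, and it is with respect to that trivialization — equivalently, counting pole orders of functions in $H^0(X,{\cal O}(D+\ast\infty_+))$ shifted by $m_0$ — that the gap sequence must be computed. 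So the correct statement is: the non-gaps of $M$ at $\infty_+$ are the pole orders, at $\infty_+$, of elements of $H^0(X,{\cal O}(D+\ast\infty_+))$, each increased by $m_0$. From Lemma~\ref{fm-hj}, $H^0(X,{\cal O}(D+\ast\infty_+))$ has basis $\{1,h_j\ (j\in[m_0]),\ f_m\ (m\geq g+1)\}$; here $1$ has pole order $0$, each $h_j$ has pole order $n-1=g$, and $f_m$ has pole order $m$. Hence the pole orders realized are $\{0\}\cup\{g\}\cup\{g+1,g+2,\dots\}$, i.e.\ the non-gaps of ${\cal O}(D)$ at $\infty_+$ are $0,g,g+1,g+2,\dots$ and its gaps are $1,2,\dots,g-1$ together with... wait, that is only $g-1$ values, but there must be $g$ gaps — the resolution is that among $\{h_1,\dots,h_{m_0}\}$, which all have pole order $g$, only \emph{one} new pole order $g$ is gained, but the differences $h_i-h_j$ have pole order $<g$ and in fact realize the pole orders $g-1, g-2,\dots,g-m_0+1$ (since the $c_j$ are distinct, an $(m_0-1)$-fold iterated difference lands in each intermediate pole order). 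Thus the non-gaps of ${\cal O}(D)$ at $\infty_+$ are $0, g-m_0+1, g-m_0+2,\dots$, the gaps are $1,2,\dots,g-m_0$, and adding $m_0$ to every non-gap gives the non-gaps of $M$: $m_0, g+1, g+2,\dots$, whence the gaps of $M$ at $\infty_+$ are exactly $0,1,\dots,m_0-1,m_0+1,\dots,g$, as claimed. This matches the total of $g$ gaps guaranteed by \cite[Lemma~1]{N3}, which is the consistency check to include.

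The main obstacle — and the step deserving the most care in the write-up — is the claim that iterated differences of the $h_j$'s realize every pole order from $g-m_0+1$ up to $g$ at $\infty_+$. This is a Vandermonde-type argument: working in the local coordinate $z=1/x$ at $\infty_+$, one has $h_j = z^{-g}(1 + c_j z + c_j^2 z^2 + \cdots)$ up to a nonzero scalar (from the expansion (\ref{exp-fm}) of $f_n$ and the geometric-series expansion of $1/(x-c_j)$), so the matrix of leading Laurent coefficients of $1, h_1, \dots, h_{m_0}$ in degrees $z^0$... rather, the coefficients of $z^{-g}, z^{-g+1}, \dots$ form a Vandermonde matrix in the distinct $c_j$, which has full rank; row-reducing it produces, for each $i=0,1,\dots,m_0-1$, a combination of the $h_j$ with leading pole order exactly $g-i$. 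One must also confirm that these combinations, together with $1$ and the $f_m$, remain a basis — but that is immediate since the change of basis within the span of $\{h_1,\dots,h_{m_0}\}$ is invertible. I would present the Vandermonde rank argument explicitly and leave the geometric-series bookkeeping to the reader.
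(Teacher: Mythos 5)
Your final argument is correct and is essentially the paper's own proof: the shift by $m_0$ between pole orders in $H^0(X,{\cal O}(D+\ast\infty_+))$ and pole orders of sections of $M$, followed by a row reduction of the $h_j$'s to produce combinations with leading pole orders $g, g-1,\dots,g-m_0+1$, and then the count against the $g$ gaps guaranteed by \cite[Lemma~1]{N3}. Your Vandermonde step is precisely what Lemma~\ref{basis-change} and Corollary~\ref{basis-tilde-h} encapsulate (the coefficient matrix $K$ is a Vandermonde matrix times a unipotent factor coming from the $\alpha_s$, so its determinant is $\prod_{i<j}(c_j-c_i)$ and the reduction to the $\tilde h_i$ is exactly your iterated-difference argument).
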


Let $K=(k_{ij})_{1\leq i,j\leq m_0}$ be the $m_0\times m_0$ matrix defined by
 \begin{gather*}
 k_{ij}=\sum_{s=0}^{j-1}\alpha_s c_i^{j-1-s}.
\end{gather*}

 \begin{Lemma}\label{basis-change}\quad
 \begin{enumerate}\itemsep=0pt
\item[$(i)$] $\det K=\prod\limits_{1\leq i<j\leq m_0} (c_j-c_i)$.
\item[$(ii)$] Let $K^{-1}=(k'_{ij})_{1\leq i,j\leq m_0}$ and
 \begin{gather*}
 \tilde{h}_i(z)=\sum_{j=1}^{m_0} k'_{ij} h_j(z). 
 \end{gather*}
 Then
 \begin{gather*}
 \tilde{h}_i(z)=z^{-(g+1-i)}+O\big(z^{-(g-m_0)}\big), \qquad 1\leq i\leq m_0. 
 \end{gather*}
 \end{enumerate}
 \end{Lemma}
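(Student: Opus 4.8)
The plan is to treat the two parts in sequence, since (ii) depends on (i). For part (i), I would first unwind the definition of $h_j$ from~\eqref{hj}. Writing $f_n(x,y)-f_n(c_j,-y_j)=\tfrac12\big(y+g_n(x)\big)-\tfrac12\big(-y_j+g_n(c_j)\big)$ and dividing by $x-c_j$, the expansion of $h_j$ at $\infty_+$ is governed by the polynomial part $g_n(x)/(x-c_j)$ together with the contribution of $y$, whose leading behaviour at $\infty_+$ is $z^{-n}(1+O(z))$. Concretely, near $\infty_+$ we have $x=z^{-1}$ and $y+g_n(x)=2f_n=2z^{-n}(1+O(z))$, so
\begin{gather*}
h_j=\frac{f_n(x,y)-f_n(c_j,-y_j)}{x-c_j}=z^{-(n-1)}\big(1+O(z)\big)+\big(\text{lower order}\big),
\end{gather*}
which confirms the pole divisor $p_j+(n-1)\infty_+$ asserted after~\eqref{hj}. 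To get the precise coefficients, I would expand $1/(x-c_j)=z\sum_{r\ge 0}(c_jz)^r$ and multiply by $g_n(x)=z^{-n}\sum_{s=0}^n\alpha_s z^s$; collecting powers of $z$ shows that the coefficient of $z^{-(n-i)}$ in $h_j$ (for $1\le i\le m_0\le n-1=g$) is exactly $\sum_{s=0}^{i-1}\alpha_s c_j^{\,i-1-s}=k_{ji}$ (note the transpose of indices relative to $K$), while the $y$-term only contributes at order $z^{-(n-1)}$ and lower and the constants $f_n(c_j,-y_j)$ contribute only to the $O(z^0)$ tail. Hence the "principal part matrix" of $(h_1,\dots,h_{m_0})$ in the basis $\{z^{-(n-1)},\dots,z^{-(n-m_0)}\}=\{z^{-g},\dots,z^{-(g-m_0+1)}\}$ is precisely $K$ (or ${}^tK$), and part~(i) follows because $K$ is a confluent Vandermonde-type matrix: its determinant is $\det(\alpha_s)$-independent since the $\alpha_s$ with $s<i$ only produce elementary lower-triangular column operations with $1$'s on the diagonal (as $\alpha_0=1$), reducing $K$ to the Vandermonde matrix $(c_i^{\,j-1})$ whose determinant is $\prod_{i<j}(c_j-c_i)$.

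For part~(ii), once part~(i) is in hand the statement is essentially a change of basis that triangularises the principal parts. By the computation above, $h_j=\sum_{i=1}^{m_0}k_{ji}\,z^{-(g+1-i)}+O\big(z^{-(g-m_0)}\big)$, i.e.\ in matrix form the vector of leading terms of $(h_1,\dots,h_{m_0})$ equals $K$ times the vector $\big(z^{-g},z^{-(g-1)},\dots,z^{-(g-m_0+1)}\big)$ modulo $O(z^{-(g-m_0)})$. Multiplying on the left by $K^{-1}$ gives $\tilde h_i(z)=\sum_j k'_{ij}h_j(z)=z^{-(g+1-i)}+O\big(z^{-(g-m_0)}\big)$, which is exactly the claim. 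The only care needed is that the error terms are genuinely uniform: each $h_j$ lies in $H^0(X,{\cal O}(D+*\infty_+))$ with a pole of order exactly $n-1=g$ at $\infty_+$ (this is where I use Lemma~\ref{fm-hj} and the fact that $D=p_1+\cdots+p_{m_0}+(g-m_0-1)\infty_+$), so every $h_j$ and hence every $\tilde h_i$ has a Laurent expansion at $\infty_+$ starting no earlier than $z^{-g}$, and the linear combination cannot create lower-order poles.

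I expect the main obstacle to be purely bookkeeping: carefully matching the index conventions between $k_{ij}=\sum_{s=0}^{j-1}\alpha_s c_i^{\,j-1-s}$, the ordering of the pole orders $z^{-(g+1-i)}$, and the expansion~\eqref{y-expand} of $y$, so that the matrix that appears really is $K$ and not its transpose or a reindexed variant. There is no deep analytic input — Riemann--Roch was already used in Lemma~\ref{fm-hj} — so the lemma reduces to an explicit Laurent-series expansion at $\infty_+$ plus the standard evaluation of a (confluent) Vandermonde determinant. Once the leading-term matrix is correctly identified as $K$, both (i) and (ii) follow immediately.
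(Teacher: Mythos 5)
Your proposal is correct and follows essentially the same route as the paper: part (ii) rests on exactly the expansion $h_i(z)=z^{-g}\big(\sum_{j=1}^{m_0}k_{ij}z^{j-1}+O(z^{m_0})\big)$ that the paper uses, and your column-operation reduction of $K$ to the Vandermonde matrix $(c_i^{\,j-1})$ (using $\alpha_0=1$) is precisely the ``computation using the properties of determinants'' that the paper leaves to the reader for part (i). The only nitpick is terminological: adding multiples of earlier columns to later ones is right-multiplication by an \emph{upper} unitriangular matrix, not a lower-triangular one, but this does not affect the conclusion.
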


\begin{proof} (i) It can be proved just by computation using the properties of determinants. So we leave the details to the reader.

 (ii) By expanding $h_i(z)$ in $z$ we have
 \begin{gather*}
 h_i(z)=z^{-g}\left(\sum_{j=1}^{m_0}k_{ij} z^{j-1}+O\big(z^{m_0}\big)\right).
 \end{gather*}
The assertion (ii) follows from this.
\end{proof}

By the lemma we have

\begin{Corollary}\label{basis-tilde-h}\quad\samepage
\begin{enumerate}\itemsep=0pt
\item[$(i)$] The following set of functions is a basis of $H^0(X,{\cal O}(D+\ast\infty_+))$,
\begin{gather*}
1, \quad f_m, \quad m\geq n, \quad \tilde{h}_i, \quad i\in[m_0].
\end{gather*}
\item[$(ii)$] The expansion coefficients of $f_m(z)$ and $\tilde{h}_i(z)$ are polynomials of $\{\lambda_j\}$.
\end{enumerate}
\end{Corollary}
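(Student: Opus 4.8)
The plan is to deduce both statements from Lemmas~\ref{fm-hj} and~\ref{basis-change}, so the proof is essentially bookkeeping; the only point that needs care is the use of Lemma~\ref{basis-change}(i).

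For (i), recall from Lemma~\ref{fm-hj} that $\{1,\,f_m\ (m\geq n),\,h_j\ (j\in[m_0])\}$ is a basis of $H^0(X,{\cal O}(D+\ast\infty_+))$. By Lemma~\ref{basis-change}(i) we have $\det K=\prod_{1\leq i<j\leq m_0}(c_j-c_i)\neq 0$, since the $c_i$ are mutually distinct, so $K$ is invertible; and by the definition of the $\tilde h_i$ the tuple $(\tilde h_1,\dots,\tilde h_{m_0})$ is the image of $(h_1,\dots,h_{m_0})$ under the invertible matrix $K^{-1}$. Hence replacing $\{h_j\}$ by $\{\tilde h_i\}$ in the above basis again yields a basis of $H^0(X,{\cal O}(D+\ast\infty_+))$, which is (i). (Alternatively this follows from Lemma~\ref{basis-change}(ii) and Proposition~\ref{gap-sequence}: the listed functions realize pairwise distinct pole orders at $\infty_+$ which are precisely the orders occurring in $H^0(X,{\cal O}(D+\ast\infty_+))$, so their linear independence and the correct dimension are automatic.)

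For (ii), I would read off the expansions at $\infty_+$. For $f_m$, the formula $f_m=\frac{1}{2}\big(x^{m-n}y+g_m(x)\big)$ together with (\ref{y-expand}) gives $f_m(z)=z^{-m}\sum_{s\geq 0}\beta_s z^s$ with $\beta_s=\alpha_s$ for $s\leq m$ and $\beta_s=\frac{1}{2}\alpha_s$ for $s>m$, so it is enough to see that each $\alpha_s$ is a polynomial in $\{\lambda_j\}$. This follows on writing $\prod_{j=1}^{2n}(1-\lambda_j z)=1+\sum_{r\geq1}e_r z^r$, with $e_r$ the (signed) elementary symmetric functions of the $\lambda_j$, and substituting into $(1+t)^{1/2}=\sum_{k\geq0}\binom{1/2}{k}t^k$, so that $\alpha_s$, being the coefficient of $z^s$, is a polynomial with rational coefficients in $e_1,\dots,e_s$. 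For the $\tilde h_i=\sum_{j=1}^{m_0}k'_{ij}h_j$, I would note that the entries $k_{ij}=\sum_{s=0}^{j-1}\alpha_s c_i^{j-1-s}$ of $K$ are polynomials in $\{\lambda_j\}$; by Cramer's rule the entries $k'_{ij}$ of $K^{-1}$ are $(\det K)^{-1}$ times polynomials in the $k_{ij}$, and since $\det K=\prod_{i<j}(c_j-c_i)$ is a non-zero constant not depending on $\{\lambda_j\}$ (Lemma~\ref{basis-change}(i)), the $k'_{ij}$ are again polynomials in $\{\lambda_j\}$; and, expanding $h_j$ at $\infty_+$ where $1/(x-c_j)=z/(1-c_j z)$, its coefficients are polynomials in $\{\lambda_j\}$, because those of $f_n$ are the $\alpha_s$ (or $\frac{1}{2}\alpha_s$) and $f_n(c_j,-y_j)=\frac{1}{2}\big(g_n(c_j)-y_j\big)$ with $g_n(c_j)=\sum_{s=0}^n\alpha_s c_j^{n-s}$ a polynomial in $\{\lambda_j\}$, the divisor data $c_j$, $y_j$ entering throughout only as fixed parameters. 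Forming the combination $\sum_j k'_{ij}h_j$ preserves this, which is (ii).

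I do not expect a genuine obstacle. The one step that must not be glossed over is the appeal to Lemma~\ref{basis-change}(i) in (ii): it is precisely because $\det K$ is the Vandermonde determinant in the fixed points $c_i$, hence a non-zero constant free of $\{\lambda_j\}$, that the entries of $K^{-1}$ stay polynomial---and not merely rational---in $\{\lambda_j\}$; the same fact underlies (i).
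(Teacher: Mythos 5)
Your proof is correct and follows essentially the same route as the paper's: (i) is the change of basis by the invertible matrix $K$ supplied by Lemmas~\ref{fm-hj} and~\ref{basis-change}, and (ii) traces polynomiality through $\alpha_s$, $k_{ij}$, Cramer's rule, and the key fact that $\det K$ is a nonzero constant independent of $\{\lambda_j\}$ --- you merely supply more detail than the paper's terse argument. One caution: your parenthetical alternative for (i) invokes Proposition~\ref{gap-sequence}, which the paper proves \emph{using} this corollary, so that aside would be circular as stated (though the distinct-pole-order observation from Lemma~\ref{basis-change}(ii) alone does give linear independence); the main argument does not rely on it.
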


\begin{proof}(i) The assertion follows from Lemmas~\ref{fm-hj} and~\ref{basis-change}.

(ii) By (\ref{y-expand}) $\alpha_i$ is a polynomial in $\{\lambda_j\}$ for any $i$. Then $k_{ij}$ and the expansion coefficients of~$f_m$ are polynomials of $\{\lambda_j\}$ by their definitions. It follows that the expansion coefficients of~$h_i(z)$ are polynomials of~$\{\lambda_j\}$. Then $k_{ij}'$ is a polynomial of $\{\lambda_r\}$ by Lemma~\ref{basis-change}(i). Consequently the expansion coefficients of $\tilde{h}_i(z)$ are polynomials of~$\{\lambda_j\}$.
\end{proof}

\begin{proof}[Proof of Proposition \ref{gap-sequence}] By Lemma \ref{basis-change}(ii) and Corollary~\ref{basis-tilde-h}(i) we see that $m_0$, $m_0+g+1-i$, $i\in [m_0]$, $m_0+m$, $m\geq g+1$ are non-gaps. The complement of these numbers in non-negative integers consists of $0,1,\dots ,m_{0}-1,m_0+1,\dots ,g$. Since the number of gaps of~$M$ at~$\infty_+$ is~$g$, these $g$ numbers are exactly the gaps.
\end{proof}

\section{Theta function solution}\label{section6}
By Corollary \ref{image-tilde-iota}, Lemma \ref{fm-hj} and Corollary \ref{basis-tilde-h}(i) it is possible to give the following definition.

\begin{Definition}\quad\begin{enumerate}\itemsep=0pt
\item[(i)] Define the point $U(D)$ of UGM by
\begin{gather*}
U(D)=\tilde{\iota}\big(H^0(X,{\cal O}(D+\ast\infty_+))\big).
\end{gather*}
\item[(ii)] Define the frames $\xi(D)$ and $\tilde{\xi}(D)$ of $U(D)$ by
\begin{gather*}
\xi(D) =\big(\ldots,\tilde{\iota}(f_{n+1}), \tilde{\iota}(f_{n}),\tilde{\iota}(h_{m_0}),\ldots,\tilde{\iota}(h_{1}), \tilde{\iota}(1)\big),\\ 
\tilde{\xi}(D)=\big(\ldots,\tilde{\iota}(f_{n+1}), \tilde{\iota}(f_{n}),\tilde{\iota}\big(\tilde{h}_{m_0}\big),\ldots,\tilde{\iota}\big(\tilde{h}_{1}\big), \tilde{\iota}(1)\big).
\end{gather*}
\end{enumerate}
\end{Definition}

By Lemma \ref{basis-change} the tau functions corresponding to $\xi(D)$ and $\tilde{\xi}(D)$ are related by
\begin{gather}
\tau(x;\xi(D))=\left(\prod_{1\leq i<j\leq m_0}(c_j-c_i)\right) \tau\big(x;\tilde{\xi}(D)\big).\label{relation-tau}
\end{gather}
By Krichever's construction \cite{Kr} the tau function $\tau\big(x;\tilde{\xi}(D)\big)$ is expressed in terms of Riemann's theta function as follows.

Let $\{\epsilon_i,\delta_i\, |\, i\in[g]\}$ be a canonical homology basis, $\{{\rm d}v_j\,|\, j\in [g]\}$ the normalized holomorphic one forms, $\Omega=\big(\int_{\delta_j} {\rm d}v_i\big)$ the period matrix, $\theta(z\,|\,\Omega)$ Riemann's theta function and $K_{\infty_+}$ Riemann's constant corresponding to the point~$\infty_+$.

For $i\geq 1$ we denote by ${\rm d}\tilde{r}_i$ the normalized differential of the second kind with a~pole only at~$\infty_+$ of order $i+1$. Namely it satisfies
\begin{gather*}
{\rm d}\tilde{r}_i={\rm d}\big(z^{-i}+O(z)\big) \qquad \text{at $\infty_+$},\qquad
\int_{\epsilon_j} {\rm d}\tilde{r}_i=0, \qquad j\in[g].
\end{gather*}

Define $\gamma_{ij}$, $\Gamma$, $e$ by
\begin{gather}
{\rm d}v_i = \sum_{j=1}^\infty \gamma_{ij} z^{j-1} {\rm d}z, \qquad \Gamma=(\gamma_{ij})_{i\in [g], j\geq 1},\nonumber\\
e=-\sum_{j=1}^{m_0}\int_{\infty_+}^{p_j} {\rm d}{\bm v}+K_{\infty_+},\qquad {\rm d}{\bm v}={}^t({\rm d}v_1,\dots ,{\rm d}v_g).\label{def-e}
\end{gather}
Since $p_1+\cdots+p_{m_0}+(g-m_0)\infty_+$ is a general divisor, $\theta(\int_{\infty_+}^p {\rm d}{\bm v}+e)$ has a zero of order $g-m_0$ at $\infty_+$ by Riemann's theorem~\cite{Fay}. Therefore
\begin{gather*}
\theta_0:=\frac{1}{(g-m_0)!}\left( \left(\sum_{j=1}^g\frac{{\rm d} v_j}{{\rm d} z}\frac{\partial}{\partial z_j}\right)^{g-m_0}\theta\right)(e\,|\,\Omega)
\end{gather*}
does not vanish.
By Krichever's theory \cite{Kr} the following function $\Psi(x;z)$ defines an adjoint wave function~\cite{DJKM},
\begin{gather*}
\Psi(x;z)=\frac{z^{g-m_0}\theta_0 \theta\big(\Gamma x+\int_{\infty_+}^p {\rm d}{\bm v}+e\,|\,\Omega\big)}{\theta\big(\int_{\infty_+}^p {\rm d}{\bm v}+e\,|\,\Omega\big)\theta(\Gamma x+e\,|\,\Omega)}
\exp\left(-\sum_{i=1}^\infty \int^p {\rm d}\tilde{r}_i\right),
\end{gather*}
where $\int^p {\rm d}\tilde{r}_i$ is the indefinite integral without the constant term. Let
\begin{gather*}
[z]=\left(z,\frac{z^2}{2},\frac{z^3}{3},\dots \right).
\end{gather*}
By \cite{DJKM} there exists, up to a constant multiple, a function $\tau(x)$ which satisfies the following equation near $\infty_+$,
\begin{gather}
\Psi(x;z)=\frac{\tau(x+[z])}{\tau(x)}\exp\left(-\sum_{i=1}^\infty x_i z^{-i}\right).\label{Psi-tau-relation}
\end{gather}
Since $z^{-(g-m_0)}\Psi(x;z)$ is invariant when $p$ goes round $\epsilon_i$, $\delta_i$ cycles, the expansion coefficients in $x_1,x_2,\dots $ of $z^{-(g-m_0)}\Psi(x;z)\theta(\Gamma x+e\,|\,\Omega)$ are elements of $H^0(X,{\cal O}(D+\ast \infty_+))$. Therefore $\tau(x)$ coincides with $\tau\big(x;\tilde{\xi}(D)\big)$ up to a constant multiple (see~\cite{N2}).

The function $\tau(x)$ satisfying the relation (\ref{Psi-tau-relation}) can be constructed in the following way. Let $E(p_1,p_2)$ be the prime form~\cite{Fay}. Write
\begin{gather*}
E(p_1,p_2)=\frac{E(z_1,z_2)}{\sqrt{{\rm d}z_1}\sqrt{{\rm d}z_2}},
\end{gather*}
where $z_i=z(p_i)$, $i=1,2$.

Define $q_{i,j}$, $\beta_j$, $q(x)$ by
\begin{gather*}
{\rm d}_{z_1}{\rm d}_{z_2}\log E(z_1,z_2) = \left(\frac{1}{(z_1-z_2)^2}+\sum_{i,j\geq 1}q_{ij}z_1^{i-1}z_2^{j-1}\right){\rm d}z_1{\rm d}z_2,\\
\log\left(\frac{z^{g-m_0-1}E(0,z)\theta_0}{\theta\big(\int_{\infty_+}^p {\rm d}{\bm v}+e\,|\,\Omega\big)}\right)=\sum_{j=1}^\infty \beta_j\frac{z^j}{j}, \qquad q(x)=\sum_{i,j=1}^\infty q_{ij}x_ix_j.
\end{gather*}

By a similar computation to \cite{N2} we have

\begin{Proposition} There exists a non-zero constant $c$ such that
\begin{gather*}
\tau\big(x;\tilde{\xi}(D)\big)= c \exp\left(\sum_{j=1}^\infty \beta_j x_j+\frac{1}{2}q(x)\right)\theta(\Gamma x+e\,|\,\Omega).
\end{gather*}
\end{Proposition}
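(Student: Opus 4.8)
The plan is to verify directly that the function
\begin{gather*}
\tau_0(x):=\exp\left(\sum_{j=1}^\infty\beta_jx_j+\frac12 q(x)\right)\theta(\Gamma x+e\,|\,\Omega),
\end{gather*}
i.e.\ the asserted formula with $c=1$, satisfies the characterizing relation (\ref{Psi-tau-relation}). Indeed, as recalled just before the statement, \cite{DJKM} provides a power series $\tau(x)$, unique up to a non-zero constant, obeying (\ref{Psi-tau-relation}) with the $\Psi(x;z)$ displayed above, and this $\tau(x)$ equals $\tau(x;\tilde{\xi}(D))$ up to a constant. Hence, once $\tau_0$ is shown to satisfy $\Psi(x;z)=\tau_0(x+[z])\,\tau_0(x)^{-1}\exp\big(-\sum_{i\geq1}x_iz^{-i}\big)$ as Laurent series at $\infty_+$, the proposition follows with $c=\tau(x;\tilde{\xi}(D))/\tau_0(x)$, which is then forced to be a constant. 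This is the computation alluded to by the phrase ``a similar computation to \cite{N2}''.

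To carry it out I would expand $\tau_0(x+[z])/\tau_0(x)$ as a product of three factors and match each with a piece of $\Psi(x;z)$. First, since ${\rm d}v_i=\sum_{j\geq1}\gamma_{ij}z^{j-1}{\rm d}z$, integrating from $\infty_+$ (where $z=0$) gives $\Gamma[z]=\int_{\infty_+}^p{\rm d}{\bm v}$, so the theta contribution is $\theta\big(\Gamma x+\int_{\infty_+}^p{\rm d}{\bm v}+e\big)/\theta(\Gamma x+e)$, which is exactly the theta quotient in $\Psi$. Second, by the definition of the $\beta_j$ the linear exponential becomes $\exp\big(\sum_{j\geq1}\beta_jz^j/j\big)=z^{g-m_0-1}E(0,z)\theta_0/\theta\big(\int_{\infty_+}^p{\rm d}{\bm v}+e\big)$, which supplies $\theta_0$, the denominator $\theta\big(\int_{\infty_+}^p{\rm d}{\bm v}+e\big)$, and all but one power of $z$ in the prefactor of $\Psi$. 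Third, using $q(x)=\sum_{i,j\geq1}q_{ij}x_ix_j$ with $q_{ij}=q_{ji}$, the quadratic exponent changes by $\sum_{i,j\geq1}q_{ij}x_iz^j/j+\frac12\sum_{i,j\geq1}q_{ij}z^{i+j}/(ij)$. I would then combine the term $\sum_{i,j}q_{ij}x_iz^j/j$ with the bare factor $\exp\big(-\sum_ix_iz^{-i}\big)$ by means of the classical expansion $\int^p{\rm d}\tilde{r}_i=z^{-i}-\sum_{j\geq1}q_{ij}z^j/j$, obtained by writing ${\rm d}\tilde{r}_i$ through the bidifferential ${\rm d}_{z_1}{\rm d}_{z_2}\log E$ and imposing $\int_{\epsilon_j}{\rm d}\tilde{r}_i=0$ (cf.\ \cite{Fay}); this turns the two together into $-\sum_ix_i\int^p{\rm d}\tilde{r}_i$, the exponential factor of $\Psi$. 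After these substitutions everything cancels except the purely $z$-dependent scalar $z^{-1}E(0,z)\exp\big(\frac12\sum_{i,j\geq1}q_{ij}z^{i+j}/(ij)\big)$, and it remains to see that this equals $1$.

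The heart of the matter — the step I expect to need the most care, because of the conventions and signs for the prime form and for the second-kind differentials — is therefore the identity $\log\big(E(0,z)/z\big)=-\frac12\sum_{i,j\geq1}q_{ij}z^{i+j}/(ij)$. I would prove it by setting $R(z_1,z_2):=\log E(z_1,z_2)-\log(z_1-z_2)$, which is holomorphic near the origin and vanishes identically on the diagonal. By the definition of $q_{ij}$ one has $\partial_{z_1}\partial_{z_2}R=\sum_{i,j\geq1}q_{ij}z_1^{i-1}z_2^{j-1}$, so $R(z_1,z_2)=\sum_{i,j\geq1}q_{ij}z_1^iz_2^j/(ij)+g(z_1)+g(z_2)$ for some single-variable series $g$, the two one-variable parts being equal by the symmetry of $R$ and of $(q_{ij})$; setting $z_1=z_2=z$ and using $R(z,z)=0$ pins down $g(z)=g(0)-\frac12\sum_{i,j}q_{ij}z^{i+j}/(ij)$, whence $R(0,z)=g(0)+g(z)=-\frac12\sum_{i,j}q_{ij}z^{i+j}/(ij)$, as claimed. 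Convergence of all series near $\infty_+$ and the legitimacy of the formal manipulations are dealt with exactly as in \cite{N2}. Once this is in place, $\tau_0$ satisfies (\ref{Psi-tau-relation}), and therefore $\tau(x;\tilde{\xi}(D))=c\,\tau_0(x)$ for a non-zero constant $c$, which is the assertion.
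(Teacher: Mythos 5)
Your verification of (\ref{Psi-tau-relation}) for the candidate function---matching the theta quotient via $\Gamma[z]=\int_{\infty_+}^p{\rm d}{\bm v}$, the linear exponent via the definition of $\beta_j$, the cross terms of $q(x)$ via $\int^p{\rm d}\tilde{r}_i=z^{-i}-\sum_{j\geq1}q_{ij}z^j/j$, and the leftover scalar via the prime-form identity $\log\big(E(0,z)/z\big)=-\frac12\sum_{i,j}q_{ij}z^{i+j}/(ij)$---is exactly the computation the paper delegates to \cite{N2} with the words ``by a similar computation,'' so your argument is correct and coincides with the paper's own approach. The only detail to watch is the sign normalization of the prime form (one needs $E(0,z)/z\to 1$, i.e., $E(z_1,z_2)\sim z_2-z_1$, which is also forced by the requirement that the series defining $\beta_j$ have no constant term), a point you already flag.
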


By Proposition \ref{gap-sequence} the top term of the Schur function expansion of $\tau\big(x;\tilde{\xi}(D)\big)$ is determined. Let $\lambda$ be the partition defined by
\begin{gather*}
\lambda=(g,g-1,\dots ,m_0+1,m_0-1,\dots ,1,0)-(g-1,\dots ,1,0)=\big(1^{g-m_0}\big).
\end{gather*}
By \cite[Corollaries~1 and~2]{N3} the partition corresponding to the Schur function which appears in the top term of the expansion of $\tau\big(x;\tilde{\xi}(D)\big)$ is given by the conjugate partition of $\lambda$, ${}^t \lambda=(g-m_0)$. Taking the conjugate of $\lambda$ is due to the minus sign in the definition~(\ref{def-e}) of~$e$. By the form of the frame $\tilde{\xi}(D)$ the Schur function expansion of $\tau\big(x;\tilde{\xi}(D)\big)$ begins from $s_{(g-m_0)}(x)=p_{g-m_0}(x)$. Thus

\begin{Proposition}\label{tau-exp-nonsing} The following expansion holds,
\begin{gather*}
\tau\big(x;\tilde{\xi}(D)\big)=p_{g-m_0}(x)+\sum_{(g-m_0)<\mu}\xi_\mu s_\mu(x),
\end{gather*}
where $(g-m_0)<\mu$ means that, if $\mu=(\mu_1,\dots ,\mu_r)$, $g-m_0\leq \mu_1$ and $\mu\neq (g-m_0)$.
\end{Proposition}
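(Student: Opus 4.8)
The plan is to obtain the two halves of the statement separately: the leading term $p_{g-m_0}(x)$ from Proposition~\ref{gap-sequence} through the dictionary of \cite[Corollaries~1 and~2]{N3}, and the constraint $\mu_1\ge g-m_0$ on all other terms from the concrete shape of the frame $\tilde\xi(D)$. Recall that by definition $\tau(x;\tilde\xi(D))=\sum_\mu\xi_\mu s_\mu(x)$, the sum being over all partitions $\mu$, with $\xi_\mu$ the Pl\"ucker coordinate of $\tilde\xi(D)$ indexed by the Maya diagram $M_\mu$ of $\mu$.

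For the leading term I would argue as the discussion preceding the statement already does. Proposition~\ref{gap-sequence} gives the gap sequence $(0,1,\dots,m_0-1,m_0+1,\dots,g)$ of the degree-zero bundle $M$ at $\infty_+$; subtracting the staircase $(g-1,\dots,1,0)$ from the decreasing rearrangement of these gaps gives $\lambda=(1^{g-m_0})$, and by \cite[Corollaries~1 and~2]{N3} the leading Schur function of $\tau(x;\tilde\xi(D))$ is the one attached to the conjugate partition ${}^t\lambda=(g-m_0)$, occurring with coefficient~$1$ (the conjugate rather than $\lambda$ itself because of the minus sign in~(\ref{def-e})); since $s_{(g-m_0)}(x)=p_{g-m_0}(x)$ this is the first term in the statement. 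One can see the same thing directly from the frame: using $\tilde\iota(f)=z^{g-m_0}f$ one has $\tilde\iota(1)=z^{g-m_0}$, $\tilde\iota(\tilde h_i)=z^{i-m_0-1}+O(1)$ for $1\le i\le m_0$ (Lemma~\ref{basis-change}(ii)), and $\tilde\iota(f_m)=z^{g-m_0-m}+O\big(z^{g-m_0-m+1}\big)$ for $m\ge n$ (by~(\ref{exp-fm})), each with leading coefficient~$1$; in the indexing $e_k=z^{k+1}$ the leading exponents of these columns fill out $\{k\le-2\}\cup\{g-m_0-1\}$, i.e.\ the Maya diagram $(g-m_0-1,-2,-3,\dots)$, which corresponds to the partition $(g-m_0)$, and reordering the columns so that their pivots sit on the diagonal exhibits the corresponding minor of $\tilde\xi(D)$ as triangular with unit diagonal, so $\xi_{(g-m_0)}\neq0$.

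For the remaining terms the only fact I would use is that $\tilde\iota(1)=z^{g-m_0}$ is literally the basis vector $e_{g-m_0-1}$, hence occurs as an honest column of $\tilde\xi(D)$. For any partition $\mu$ the Pl\"ucker coordinate $\xi_\mu$ is the determinant of the submatrix of $\tilde\xi(D)$ on the rows indexed by $M_\mu$, and among the columns of this submatrix is the restriction of $e_{g-m_0-1}$ to those rows; if $g-m_0-1\notin M_\mu$ this is the zero column, so $\xi_\mu=0$. Thus $\xi_\mu\neq0$ forces $g-m_0-1\in M_\mu$, whence $\max M_\mu\ge g-m_0-1$ and $\mu_1=\max M_\mu+1\ge g-m_0$. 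Combined with the identification of the $(g-m_0)$-term this is exactly the asserted expansion, the sum running over $\mu\neq(g-m_0)$ with $\mu_1\ge g-m_0$.

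The part I expect to be the real work is the clean use of \cite[Corollaries~1 and~2]{N3}: one has to check that $\tilde\xi(D)$ is admissible for the tau-function correspondence (which it is, by Corollary~\ref{image-tilde-iota}), that passing to an equivalent frame in the normal form~(\ref{frame-cond}) does not change the coefficient~$1$, and, above all, that the gap data of $M$ is converted into the conjugate partition ${}^t\lambda$ rather than into $\lambda$. The combinatorial parts---the triangularity check and the zero-column argument---are then routine.
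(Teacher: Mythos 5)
Your argument is correct and follows the same route as the paper: the gap sequence of $M$ at $\infty_+$ from Proposition~\ref{gap-sequence} combined with \cite[Corollaries~1 and~2]{N3} identifies the top term as $s_{{}^t\lambda}=s_{(g-m_0)}=p_{g-m_0}$, and the structure of the frame $\tilde{\xi}(D)$ gives the rest. The only difference is that you spell out explicitly (via the column $\tilde{\iota}(1)=e_{g-m_0-1}$ and the zero-column/triangularity observations) what the paper compresses into the single phrase ``by the form of the frame $\tilde{\xi}(D)$''.
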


\section{Degeneration}\label{section7}
We consider the limit $\lambda_{j+n}\rightarrow \lambda_j$, $j\in [n]$, of the curve (\ref{hyperelliptic}). By Corollary~\ref{basis-tilde-h} and Lemma~\ref{basis-change}, the limits $\xi^0(D)$ and $\tilde{\xi}^0(D)$ of $\xi(D)$ and $\tilde{\xi}(D)$ exist respectively. Since the Pl\"ucker coordinates of $\tilde{\xi}(D)$ tends to those of $\tilde{\xi}^0(D)$ the following equation holds
 \begin{gather*}
 \lim \tau\big(x;\tilde{\xi}(D)\big)=\tau\big(x;\tilde{\xi}^0(D)\big).
 \end{gather*}
In this section we show that $\xi^0(D)$ can be transformed to a frame of the form (\ref{soliton-frame}) by a gauge transformation.

The hyperelliptic curve (\ref{hyperelliptic}) tends to
\begin{gather*}
y^2=F(x)^2, \qquad F(x)=\prod_{j=1}^n(x-\lambda_j).
\end{gather*}

Let
\begin{gather*}
f(z)=\prod_{j=1}^n(1-\lambda_j z).
\end{gather*}
Then the Taylor series $y(z)$ of $y$ around $\infty_+$ tends to
\begin{gather*}
y(z)=z^{-n} f(z),
\end{gather*}
where we use the same symbol $y(z)$ for the limit of $y(z)$. Let $g_m^0$ and $f_m^0$ be the limits of~$g_m$ and~$f_m$ respectively. Then
\begin{gather*}
g_m^0=f^0_m(z)=z^{-m}f(z), \qquad m\geq n.
\end{gather*}

To determine the limit of $h_i$ we need to specify the limit of the point $p_i=(c_i,y_i)$. We do this in the following way.

Since $c_i$ does not depend on $\{\lambda_j\}$, $p_i$ goes to
\begin{gather}
p_i^0=(c_i,\varepsilon_i F(c_i)),\label{pi0}
\end{gather}
where $\varepsilon_i=\pm 1$. Let $k$ be an integer such that
\begin{gather*}
0\leq k\leq m_0.
\end{gather*}
Set
\begin{gather*}
l=m_0-k.
\end{gather*}
We assume, in (\ref{pi0}), that
\begin{gather*}
\varepsilon_i=
\begin{cases}
-1,&1\leq i\leq k,\\
\hphantom{-}1,& k+1\leq i\leq m_0.
\end{cases}
\end{gather*}
For simplicity set
\begin{gather*}
d_i=c_{k+i},\qquad i\in [l].
\end{gather*}
Then
\begin{gather*}
p^0_i = (c_i, -F(c_i)), \qquad i\in[k],\qquad p^0_{k+i}=(d_i, F(d_i)),\qquad i\in[l].
\end{gather*}
This condition is satisfied if $p_1,\dots ,p_k$ are in a small neighborhood of $\infty_-$ and $p_{k+1},\dots ,p_{m_0}$ are in a small neighborhood of~$\infty_+$.

The limits of the quantities in the numerator of (\ref{hj}) are
\begin{gather*}
-y_i+g_n(c_i) \longrightarrow 2 F(c_i), \qquad i\in[k],\qquad -y_{k+i}+g_n(d_i)\longrightarrow 0, \qquad i\in[l],\\
y+g_n(x)\longrightarrow 2z^{-n}f(z).
\end{gather*}
Therefore the limit $h_i^0(z)$ of $h_i$ becomes
\begin{gather*}
h^0_i(z) = z^{-(n-1)}\frac{f(z)-z^nF(c_i)}{1-c_iz},\qquad i\in[k],\\
h^0_{k+i}(z)=z^{-(n-1)}\frac{f(z)}{1-d_iz},\qquad i\in[l].
\end{gather*}

Then the frame $\xi^0(D)$ is given by
\begin{gather}
\xi^0(D)=\big(\ldots,z^{-m_0-2}f(z),z^{-m_0-1}f(z),z^{g-m_0}h^0_{m_0}(z),\dots ,z^{g-m_0}h^0_{1}(z),z^{g-m_0}\big).\label{xi-D-0}
\end{gather}

\begin{Definition}We denote the point of UGM corresponding to the frame (\ref{xi-D-0}) by $U^0(D)$.
\end{Definition}

In order to identify the solution corresponding to $U^0(D)$ with a soliton solution we change a~basis and make a gauge transformation. To this end let
\begin{gather*}
\varphi(z)=\prod_{j=1}^l(1-d_jz), \qquad \varphi_i(z)=\frac{\varphi(z)}{1-d_iz}.
\end{gather*}

Consider the gauge transformation $\varphi(z) U^0(D)$ of $U^0(D)$. Then the following set of functions is a basis of $\varphi(z) U^0(D)$,
\begin{gather*}
z^{g-m_0}\varphi(z), \qquad z^{g-m_0}\varphi(z) h_i^0(z),\qquad i\in[k],\qquad z^{-m_0}\varphi_i(z) f(z),\qquad i\in[l],\nonumber\\
z^{-i}\varphi(z) f(z),\qquad i\geq m_0+1. 
\end{gather*}

\begin{Lemma}\label{phi-U0}The following set of functions is a basis of $\varphi(z) U^0(D)$:
\begin{gather*}
z^{g-m_0}\varphi(z), \qquad z^{g-m_0}\varphi(z)h_i^0(z), \qquad i\in[k],\qquad z^{-i}f(z), \qquad i\geq k+1.
\end{gather*}
\end{Lemma}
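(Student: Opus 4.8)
The plan is to use the basis of $\varphi(z)U^0(D)$ exhibited just before the lemma — it is $\varphi(z)$ times the frame $\xi^0(D)$ of $U^0(D)$, with the entries written out — and to show that the list in the statement spans the same subspace and is linearly independent. Since the $k+1$ functions $z^{g-m_0}\varphi(z)$ and $z^{g-m_0}\varphi(z)h_i^0(z)$, $i\in[k]$, occur in both lists, it is enough to prove (i) that the span of $\{z^{-m_0}\varphi_i(z)f(z):i\in[l]\}\cup\{z^{-j}\varphi(z)f(z):j\geq m_0+1\}$ coincides with the span of $\{z^{-j}f(z):j\geq k+1\}$, and (ii) that the list in the statement is linearly independent.

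For (i) I would first check that $\{\varphi_1(z),\dots,\varphi_l(z)\}$ is a basis of the space of polynomials in $z$ of degree $\leq l-1$: the numbers $d_i=c_{k+i}$ are pairwise distinct and $1/d_i$ is a simple zero of $\varphi(z)$, so $\varphi_i(1/d_j)=0$ for $i\neq j$ while $\varphi_i(1/d_i)\neq0$, and the matrix $\big(\varphi_i(1/d_j)\big)$ is diagonal and invertible. Since $m_0-l=k$, this gives $\operatorname{span}\{z^{-m_0}\varphi_i(z)f(z):i\in[l]\}=\operatorname{span}\{z^{-j}f(z):k+1\leq j\leq m_0\}$. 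Writing $\varphi(z)=\sum_{m=0}^l a_mz^m$ with $a_0=1$ and using $z^{-j}\varphi(z)f(z)=\sum_{m=0}^l a_mz^{-(j-m)}f(z)$, an induction on $j\geq m_0+1$ then shows that adjoining the functions $z^{-j}\varphi(z)f(z)$, $j\geq m_0+1$, enlarges the span to precisely $\operatorname{span}\{z^{-j}f(z):j\geq k+1\}$. This step is routine bookkeeping with the polynomials $\varphi_i$ and the normalisation $\varphi(0)=1$.

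For (ii) I would take a finite linear relation $\alpha z^{g-m_0}\varphi(z)+\sum_{i=1}^k\beta_iz^{g-m_0}\varphi(z)h_i^0(z)+\sum_j\gamma_jz^{-j}f(z)=0$. Because $g-m_0\geq0$ and $h_i^0(z)=z^{-(n-1)}\big(f(z)-z^nF(c_i)\big)/(1-c_iz)$ is a Laurent polynomial, every term is a Laurent polynomial in $z$, so I may evaluate at $z=1/\lambda_r$ for $r\in[n]$. Since $f(1/\lambda_r)=0$ the terms $z^{-j}f(z)$ vanish, a short computation gives $h_i^0(1/\lambda_r)=-F(c_i)/(\lambda_r-c_i)$, and $\varphi(1/\lambda_r)\neq0$ because $c_i\neq\lambda_j$ for all $i,j$; hence the relation collapses to $\alpha=\sum_{i=1}^k\beta_iF(c_i)/(\lambda_r-c_i)$ for every $r\in[n]$. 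Writing $\sum_{i=1}^k\beta_iF(c_i)/(z-c_i)-\alpha$ as a rational function with numerator of degree $\leq k$, this numerator vanishes at the $n$ distinct points $\lambda_1,\dots,\lambda_n$, and $k\leq m_0\leq g<n$, so it is identically zero; comparing degrees forces $\alpha=0$, taking residues at $z=c_i$ and using $F(c_i)\neq0$ forces $\beta_i=0$ for all $i\in[k]$, and the relation reduces to $f(z)\sum_j\gamma_jz^{-j}=0$ in ${\mathbb C}((z))$, whence every $\gamma_j=0$. Together with (i) this shows the list is a basis of $\varphi(z)U^0(D)$.

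The span computation (i) is the routine part; the delicate point will be the linear independence in (ii), specifically arranging the reduction modulo $f(z)$ — the evaluation at the points $1/\lambda_r$ — so that the $\gamma_j$-terms drop out cleanly, and then forcing the remaining coefficients $\alpha$ and $\beta_i$ to zero via the degree count $k<n$.
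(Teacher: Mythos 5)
Your step (i) is exactly the paper's proof: the paper observes that $\operatorname{Span}_{\mathbb C}\{\varphi_i(z)\,|\,i\in[l]\}=\operatorname{Span}_{\mathbb C}\{z^i\,|\,0\leq i\leq l-1\}$ and hence that $\{z^{-m_0}\varphi_i(z),\,i\in[l]\}\cup\{z^{-i}\varphi(z),\,i\geq m_0+1\}$ spans the same space as $\{z^{-i}\,|\,i\geq k+1\}$, which after multiplying by the unit $f(z)$ gives the lemma; your evaluation of $\varphi_i$ at the points $1/d_j$ and the induction on $j\geq m_0+1$ just fill in the details the paper leaves to the reader. Your step (ii) is correct (the evaluation at $z=1/\lambda_r$, the computation $h_i^0(1/\lambda_r)=-F(c_i)/(\lambda_r-c_i)$, and the degree count $k<n$ all check out), but it is logically redundant: since the displayed list preceding the lemma is already a basis of $\varphi(z)U^0(D)$ and you are replacing the sub-family $\{z^{-m_0}\varphi_i(z)f(z)\}\cup\{z^{-j}\varphi(z)f(z)\}$ by the manifestly independent family $\{z^{-j}f(z)\,|\,j\geq k+1\}$ with the same span, linear independence of the new list is automatic. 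So the argument is sound; it simply does more work than necessary on the independence side.
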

\begin{proof} Since $d_i$'s are mutually distinct and $\deg \varphi_i(z)=l-1$,
\begin{gather*}
\operatorname{Span}_{\mathbb C}\{\varphi_i(z)\,|\, i\in[l]\} = \operatorname{Span}_{\mathbb C}\big\{z^{i}\,|\,0\leq i\leq l-1\big\},
\end{gather*}
where $\operatorname{Span}_{\mathbb C}\{\ast\}$ denotes the vector space spanned by $\{\ast\}$. Therefore, noticing that $\deg \varphi(z)=l$, we have
\begin{gather*}
\operatorname{Span}_{\mathbb C}\big\{ z^{-m_0}\varphi_i(z), \, i\in [l], \, z^{-i}\varphi(z),\, i\geq m_0+1 \big\} =\operatorname{Span}_{\mathbb C}\big\{z^{-i}\,|\,i\geq k+1\big\},
\end{gather*}
which shows the lemma.
\end{proof}

Define $b_{i,j}$ by
\begin{gather}
z^g h_i^0(z)=\frac{f(z)-z^nF(c_i)}{1-c_iz}=\sum_{j=0}^{n-1}b_{i,j} z^j.\label{hi0}
\end{gather}

It is possible to erase the terms of degree less than $l$ in $\varphi(z)\sum\limits_{j=0}^{n-1}b_{i,j}z^j$ by subtracting an appropriate linear combination of $z^rf(z)$, $0\leq r\leq l-1$. It means that there exist constants $\eta_r$, $0\leq r\leq l-1$ and a unique polynomial $G_i(z)$ of degree at most $n-1$ satisfying the following equation:
\begin{gather}
\varphi(z)\sum_{j=0}^{n-1}b_{i,j}z^j=\sum_{r=0}^{l-1}\eta_r z^rf(z)+z^lG_i(z).\label{def-gi}
\end{gather}

\begin{Proposition}\label{basis-UD0} A basis of $\varphi(z) U^0(D)$ is given by
\begin{gather*}
z^{g-m_0}\varphi(z), \qquad z^{-k}G_i(z), \qquad i \in [k], \qquad z^{-i}f(z), \qquad i\geq k+1.
\end{gather*}
\end{Proposition}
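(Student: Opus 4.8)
The plan is to derive the asserted basis from the basis of $\varphi(z)U^0(D)$ furnished by Lemma~\ref{phi-U0},
\[
z^{g-m_0}\varphi(z),\qquad z^{g-m_0}\varphi(z)h_i^0(z),\quad i\in[k],\qquad z^{-i}f(z),\quad i\geq k+1,
\]
by performing an explicit, essentially unitriangular, change of basis that replaces each vector $z^{g-m_0}\varphi(z)h_i^0(z)$ by $z^{-k}G_i(z)$ while leaving the remaining vectors untouched.

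First I would rewrite $z^{g-m_0}\varphi(z)h_i^0(z)$ using the definitions (\ref{hi0}) and (\ref{def-gi}). Since $g=n-1$, equation (\ref{hi0}) gives $z^{g-m_0}h_i^0(z)=z^{-m_0}\sum_{j=0}^{n-1}b_{i,j}z^j$, whence
\[
z^{g-m_0}\varphi(z)h_i^0(z)=z^{-m_0}\varphi(z)\sum_{j=0}^{n-1}b_{i,j}z^j .
\]
Substituting (\ref{def-gi}) and using $l-m_0=-k$, the right-hand side becomes $z^{-k}G_i(z)+\sum_{r=0}^{l-1}\eta_r z^{r-m_0}f(z)$. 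As $r$ runs from $0$ to $l-1$ the exponent $r-m_0$ runs from $-m_0$ to $-(k+1)$, so the sum is a linear combination of the vectors $z^{-i}f(z)$ with $k+1\leq i\leq m_0$, all of which occur in the basis above. Hence
\[
z^{-k}G_i(z)=z^{g-m_0}\varphi(z)h_i^0(z)-\sum_{r=0}^{l-1}\eta_r z^{r-m_0}f(z)\in\varphi(z)U^0(D),
\]
and conversely $z^{g-m_0}\varphi(z)h_i^0(z)$ is recovered from $z^{-k}G_i(z)$ modulo the span of the $z^{-i}f(z)$, $i\geq k+1$.

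It then remains to observe that the linear map carrying the Lemma~\ref{phi-U0} basis to the proposed set fixes $z^{g-m_0}\varphi(z)$ and all $z^{-i}f(z)$, and on each $z^{g-m_0}\varphi(z)h_i^0(z)$ it is the identity plus a map into the span of the fixed vectors; its matrix is therefore unitriangular, hence invertible, so the proposed set is again a basis of $\varphi(z)U^0(D)$.

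I do not anticipate a genuine obstacle: the argument is purely linear-algebraic once the exponent bookkeeping is carried out. The only points needing care are verifying that $l-1-m_0=-(k+1)$, so that every subtracted term $z^{r-m_0}f(z)$ indeed has index $\geq k+1$, and invoking the uniqueness together with the degree bound $\deg G_i\leq n-1$ recorded in (\ref{def-gi}) so that $z^{-k}G_i(z)$ is well defined; both are already established.
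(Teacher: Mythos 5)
Your proposal is correct and follows essentially the same route as the paper: multiply (\ref{def-gi}) by $z^{-m_0}$, use (\ref{hi0}) to identify $z^{-m_0}\varphi(z)\sum_{j=0}^{n-1}b_{i,j}z^j$ with the basis vector $z^{g-m_0}\varphi(z)h_i^0(z)$ from Lemma~\ref{phi-U0}, and observe that the correction terms $z^{-j}f(z)$, $k+1\leq j\leq m_0$, already lie in that basis, so the substitution is a triangular (hence invertible) change of basis. The exponent bookkeeping $l-1-m_0=-(k+1)$ that you flag is exactly the content of the paper's reindexed identity (\ref{def-gi-1}).
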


\begin{proof} Multiplying (\ref{def-gi}) by $z^{-m_0}$ we have
\begin{gather}
z^{-m_0}\varphi(z)\sum_{j=0}^{n-1}b_{i,j}z^j=\sum_{j=k+1}^{m_0}\eta_{m_0-j} z^{-j}f(z)+z^{-k}G_i(z).\label{def-gi-1}
\end{gather}
Then the lemma follows from Lemma \ref{phi-U0}, (\ref{hi0}), (\ref{def-gi-1}).
\end{proof}

Define
\begin{gather}
{\widehat U}^0(D)=\varphi(z)f(z)^{-1}U^0(D).\label{gauge-trf}
\end{gather}
Then

\begin{Theorem}\label{main-th}The following set of functions is a basis of ${\widehat U}^0(D)$:
\begin{gather*}
z^{-k}\sum_{i=1}^n\frac{a_{i,j}}{1-\lambda_i z}, \qquad j\in [k+1], \qquad z^{-i}, \qquad i\geq k+1,
\end{gather*}
where $A=(a_{i,j})_{i\in [n],j\in [k+1]}$ is given by
\begin{gather*}
A=\left[
\begin{matrix}
\frac{D_1}{\Lambda_1}C_{1,1}&\ldots&\frac{D_1}{\Lambda_1}C_{1,k}&\frac{D_1}{\Lambda_1}\\
\vdots&\quad&\vdots&\vdots\\
\frac{D_n}{\Lambda_n}C_{n,1}&\ldots&\frac{D_n}{\Lambda_n}C_{n,k}&\frac{D_n}{\Lambda_n}\\
\end{matrix}
\right],\\
\Lambda_i=\prod_{r\neq i}^n(\lambda_i-\lambda_r),\qquad D_i=\prod_{s=1}^l(\lambda_i-d_s), \qquad C_{i,j}=\prod_{r\neq i}^n(c_j-\lambda_r).
\end{gather*}
\end{Theorem}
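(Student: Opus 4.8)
The plan is to start from the explicit basis of $\widehat U^0(D)$ provided by Proposition~\ref{basis-UD0} and deform it into the asserted soliton frame by a change of basis that is trivial modulo $\operatorname{Span}_{\mathbb C}\{z^{-i}\,|\,i\geq k+1\}$. Since $\widehat U^0(D)=\varphi(z)f(z)^{-1}U^0(D)$ and multiplication by $f(z)^{-1}$ is a linear automorphism of $V$ (because $f(0)=1$), dividing the basis of Proposition~\ref{basis-UD0} by $f(z)$ shows that
\begin{gather*}
u_0:=\frac{z^{g-m_0}\varphi(z)}{f(z)},\qquad u_p:=\frac{z^{-k}G_p(z)}{f(z)}\quad(p\in[k]),\qquad z^{-i}\quad(i\geq k+1)
\end{gather*}
is a basis of $\widehat U^0(D)$; I will match $u_0$ with the frame vector $j=k+1$ and each $u_p$ with the frame vector $j=p$.

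The computational heart is the partial fraction identity, valid for $p\in[k]$,
\begin{gather*}
\frac{z^{g}h^0_p(z)}{f(z)}=\frac{f(z)-z^{n}F(c_p)}{(1-c_pz)f(z)}=\sum_{i=1}^{n}\frac{F(c_p)}{(c_p-\lambda_i)\Lambda_i}\cdot\frac{1}{1-\lambda_iz},
\end{gather*}
whose point is that the would-be pole at $z=1/c_p$ has residue $0$: since $f(1/c_p)=F(c_p)/c_p^{\,n}=(1/c_p)^{n}F(c_p)$, the numerator vanishes there. One also uses the elementary $z^{g}/f(z)=\sum_i\Lambda_i^{-1}(1-\lambda_iz)^{-1}$. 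Inserting these into the relation~(\ref{def-gi}) defining $G_p$, after dividing it by $z^{m_0}f(z)$, gives
\begin{gather*}
u_p=\sum_{i=1}^{n}\frac{F(c_p)}{(c_p-\lambda_i)\Lambda_i}\cdot\frac{z^{-m_0}\varphi(z)}{1-\lambda_iz}-\sum_{r=0}^{l-1}\eta_r z^{r-m_0},\qquad u_0=\sum_{i=1}^{n}\frac{1}{\Lambda_i}\cdot\frac{z^{-m_0}\varphi(z)}{1-\lambda_iz}.
\end{gather*}

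The last ingredient is the reduction
\begin{gather*}
\frac{z^{-m_0}\varphi(z)}{1-\lambda_iz}-\frac{D_iz^{-k}}{1-\lambda_iz}\in\operatorname{Span}_{\mathbb C}\{z^{-j}\,|\,k+1\leq j\leq m_0\},\qquad i=1,\dots,n,
\end{gather*}
which holds because $\varphi(z)-D_iz^l$ has degree $\leq l$ and vanishes at $z=1/\lambda_i$ (as $\varphi(1/\lambda_i)=D_i/\lambda_i^{\,l}$), so $\varphi(z)-D_iz^l=(1-\lambda_iz)R_i(z)$ with $\deg R_i\leq l-1$, whence the displayed difference equals $z^{-m_0}R_i(z)$, with exponents in $[-m_0,-k-1]$ --- the interval that also contains the exponents of $\sum_r\eta_rz^{r-m_0}$. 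Substituting back, and using $C_{i,p}=F(c_p)/(c_p-\lambda_i)$ and $a_{i,k+1}=D_i/\Lambda_i$, one gets $u_p\equiv z^{-k}\sum_i a_{i,p}(1-\lambda_iz)^{-1}$ for $p\in[k]$ and $u_0\equiv z^{-k}\sum_i a_{i,k+1}(1-\lambda_iz)^{-1}$ modulo $\operatorname{Span}_{\mathbb C}\{z^{-i}\,|\,i\geq k+1\}$, each congruence being an equality up to a finite combination of $z^{-(k+1)},\dots,z^{-m_0}$. Hence the transition from the basis $\{u_0,u_1,\dots,u_k\}\cup\{z^{-i}\,|\,i\geq k+1\}$ to $\bigl\{z^{-k}\sum_i a_{i,j}(1-\lambda_iz)^{-1}\,|\,j\in[k+1]\bigr\}\cup\{z^{-i}\,|\,i\geq k+1\}$ is given by a unitriangular, hence invertible, linear map fixing every $z^{-i}$, so the latter set is again a basis of $\widehat U^0(D)$, which is the assertion.

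I expect the partial fraction identity of the second step to be the main obstacle: one has to spot that $z^{g}h^0_p(z)/f(z)$ has no pole at $z=1/c_p$, which is precisely the feature forced by the degenerate shape $h^0_p(z)=z^{-(n-1)}\bigl(f(z)-z^{n}F(c_p)\bigr)/(1-c_pz)$ obtained in Section~\ref{section7}; once it is in place the remaining manipulations are routine bookkeeping with Laurent polynomials.
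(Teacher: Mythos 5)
Your argument is correct and follows essentially the same route as the paper: both start from the basis of Proposition~\ref{basis-UD0}, divide by $f(z)$, and expand into partial fractions, the paper merely packaging your residue computations into the single evaluation $G_i\big(\lambda_j^{-1}\big)=\lambda_j^{-(n-1)}D_jC_{j,i}$ (obtained by setting $z=\lambda_j^{-1}$ in~(\ref{def-gi})), which turns the frame vectors into exact identities $z^{-k}G_i(z)/f(z)=z^{-k}\sum_j a_{j,i}(1-\lambda_jz)^{-1}$ rather than congruences modulo $\operatorname{Span}_{\mathbb C}\{z^{-i}\,|\,i\geq k+1\}$. Your weaker congruence version (your correction terms in fact vanish identically) still yields the basis claim via the unitriangular change of basis, so the proof stands.
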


Define
\begin{gather*}
\widehat{H}_j=
\begin{cases}
\displaystyle z^{-k}\sum_{i=1}^n\frac{a_{i,j}}{1-\lambda_i z}, & j\in [k+1],\\
z^{-j+1}, &j\geq k+2,
\end{cases}\\
\widehat{\xi}^0(D)=\big[\ldots,\widehat{H}_3,\widehat{H}_2,\widehat{H}_1\big].
\end{gather*}

Then Theorem \ref{main-th} tells that $\tau\big(x;\widehat{\xi}^0(D)\big)$ is an $(n,k+1)$ soliton.

\begin{Remark}\label{remark2} In \cite{Abenda} the case $m_0=n-1$, which corresponds to regular solutions, is studied. In this case the matrix $A$ in Theorem~\ref{main-th} is the dual matrix of that in~\cite{Abenda}. More precisely the correspondence is as follows. We denote ${\rm Gr}(M,N)$ the Grassmannian which is the set of $M$-dimensional subspaces in an $N$-dimensional vector space. Let $k$, $l$ be as in the present paper. Consider the $l\times n$ matrix $B$ in ${\rm Gr}(l,n)$ corresponding to the divisor $\big(\gamma_1^{(l)},\dots ,\gamma_l^{(l)}, \delta_1^{(l)},\dots ,\delta_k^{(l)}\big)$ given in \cite[Theorem~9.1]{Abenda}. We identify $d_i=\gamma_i^{(l)}$, $c_i=\delta_i^{(l)}$, $\lambda_j=\kappa_j$. Since $n-l=k+1$, the dual matrix $\widehat{B}$ of $B$ given in \cite[Section~10]{Abenda} belongs to ${\rm Gr}(k+1,n)$. Then it can be shown that ${}^t A$ and $\widehat{B}$ give the same element of ${\rm Gr}(k+1,n)$. That is there exists an invertible $(k+1)\times (k+1)$ matrix $C$ such that ${}^t AC=\widehat{B}$. So the tau function corresponding to~$A$ in Theorem~\ref{main-th} coincides with that corresponding to $\widehat{B}$ up to constant multiple.
\end{Remark}

The theorem is proved by expanding elements of the basis in Proposition \ref{basis-UD0} into partial fractions by using the following lemma which easily follows from the definition (\ref{def-gi}) of $G_i(z)$.

\begin{Lemma} For $1\leq j\leq n$ we have
\begin{gather*}
G_i\big(\lambda_j^{-1}\big)=\lambda_j^{-(n-1)}\prod_{s=1}^l(\lambda_j-d_s)\prod_{r\neq j}^n(c_i-\lambda_r).
\end{gather*}
\end{Lemma}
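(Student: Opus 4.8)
The plan is to exploit the single fact that $z=\lambda_j^{-1}$ is a root of $f(z)=\prod_{r=1}^n(1-\lambda_r z)$, since the factor with $r=j$ vanishes there. Because of this, evaluating the defining relation (\ref{def-gi}) at $z=\lambda_j^{-1}$ makes every term $\eta_r z^r f(z)$ on the right-hand side disappear, and the relation collapses to
\begin{gather*}
\varphi\big(\lambda_j^{-1}\big)\sum_{m=0}^{n-1}b_{i,m}\lambda_j^{-m}=\lambda_j^{-l}\,G_i\big(\lambda_j^{-1}\big),
\end{gather*}
where I have renamed the summation index to $m$ to avoid a clash with $\lambda_j$. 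Thus the lemma reduces to computing the two explicit quantities $\varphi(\lambda_j^{-1})$ and $\sum_m b_{i,m}\lambda_j^{-m}=\big(z^g h_i^0(z)\big)\big|_{z=\lambda_j^{-1}}$, after which $G_i(\lambda_j^{-1})$ is read off by dividing by $\lambda_j^{-l}$.

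First I would evaluate $\varphi$: directly from $\varphi(z)=\prod_{s=1}^l(1-d_s z)$ one gets $\varphi(\lambda_j^{-1})=\prod_{s=1}^l\big(1-d_s\lambda_j^{-1}\big)=\lambda_j^{-l}\prod_{s=1}^l(\lambda_j-d_s)$. Next I would use the closed form in (\ref{hi0}), namely $z^g h_i^0(z)=(f(z)-z^nF(c_i))/(1-c_i z)$, which also identifies $\sum_m b_{i,m}z^m$ with $z^g h_i^0(z)$. Since $f(\lambda_j^{-1})=0$, only the term $-z^nF(c_i)$ survives in the numerator, giving
\begin{gather*}
\big(z^g h_i^0\big)\big(\lambda_j^{-1}\big)=\frac{-\lambda_j^{-n}F(c_i)}{1-c_i\lambda_j^{-1}}.
\end{gather*}
Here $F(c_i)=\prod_{r=1}^n(c_i-\lambda_r)$ contains the factor $(c_i-\lambda_j)$, and writing $1-c_i\lambda_j^{-1}=-\lambda_j^{-1}(c_i-\lambda_j)$ lets this factor cancel against the denominator, leaving $\lambda_j^{-(n-1)}\prod_{r\neq j}^n(c_i-\lambda_r)$.

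Substituting both evaluations into the collapsed relation, the factor $\lambda_j^{-l}$ appears on both sides and cancels, yielding exactly
\begin{gather*}
G_i\big(\lambda_j^{-1}\big)=\lambda_j^{-(n-1)}\prod_{s=1}^l(\lambda_j-d_s)\prod_{r\neq j}^n(c_i-\lambda_r),
\end{gather*}
as claimed. There is no conceptual obstacle here; the only thing requiring care is the bookkeeping of the powers of $\lambda_j$ and the sign produced when rewriting $1-c_i\lambda_j^{-1}$ as $-\lambda_j^{-1}(c_i-\lambda_j)$, so that the $(c_i-\lambda_j)$ factor cancels correctly and the surviving product runs over $r\neq j$.
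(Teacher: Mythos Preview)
Your proof is correct and is precisely the computation the paper alludes to when it says the lemma ``easily follows from the definition (\ref{def-gi}) of $G_i(z)$''; the paper gives no further details, and your evaluation at $z=\lambda_j^{-1}$ using $f(\lambda_j^{-1})=0$ together with the closed form (\ref{hi0}) is exactly what is intended.
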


\section[The tau function corresponding to $\tilde{\xi}^0(D)$]{The tau function corresponding to $\boldsymbol{\tilde{\xi}^0(D)}$}\label{section8}
In this section we compute the tau function $\tau\big(x;\tilde{\xi}^0(D)\big)$ and the corresponding adjoint wave function.

By taking the limit of (\ref{relation-tau}) we have
\begin{gather}
\tau(x;\xi^0(D))=\left(\prod_{1\leq i<j\leq m_0}(c_j-c_i)\right) \tau\big(x;\tilde{\xi}^0(D)\big). \label{relation-tau-1}
\end{gather}
The tau function $\tau\big(x;\widehat{\xi}^0(D)\big)$ can be expressed by $\tau\big(x;\xi^0(D)\big)$ as follows.

If two frames of points of UGM are related by
\begin{gather*}
\xi'=\psi(z) \xi, \qquad \psi(z)=1+O(z), \qquad \log \psi(z)=\sum_{i=1}^\infty g_i\frac{z^i}{i},
\end{gather*}
then, by (\ref{Psi-tau-relation}),
\begin{gather*}
\tau(x;\xi')=\tau(x;\xi)\exp\left(\sum_{i=1}^\infty g_i x_i\right).
\end{gather*}

So in our case we need to compute the expansion of $\log f(z)$ and $\log \varphi(z)$. Let $P_i(u_1,\dots ,u_r)$ be the power sum symmetric function defined by
\begin{gather*}
P_i(u_1,\dots ,u_r)=\sum_{j=1}^r u_j^i.
\end{gather*}
Then
\begin{gather*}
\log f(z)^{-1}=\sum_{j=1}^\infty P_j(\lambda_1,\dots ,\lambda_n)\frac{z^j}{j},\qquad \log \varphi(z)=-\sum_{j=1}^\infty P_j(d_1,\dots ,d_l) \frac{z^j}{j}.
\end{gather*}

By (\ref{gauge-trf}) we have
\begin{gather}
\tau\big(x;\widehat{\xi}^0(D)\big)= \exp\left(\sum_{j=1}^\infty(P_j(\lambda_1,\ldots,\lambda_n)-P_j(d_1,\ldots,d_n)) x_j \right)\tau\big(x;\xi^0(D)\big).\label{tau-gauge-1}
\end{gather}

By (\ref{relation-tau-1}) and (\ref{tau-gauge-1})
\begin{gather}
\tau\big(x;\tilde{\xi}^0(D)\big)=\left(\prod_{1\leq i<j\leq m_0}(c_j-c_i)\right)^{-1}\exp\left(\sum_{j=1}^\infty(-P_j(\lambda_1,\ldots,\lambda_n)+P_j(d_1,\ldots,d_n))x_j \right)\nonumber\\
\hphantom{\tau\big(x;\tilde{\xi}^0(D)\big)=}{} \times \tau\big(x;\widehat{\xi}^0(D)\big).\label{relation-tau-2}
\end{gather}

The tau function corresponding to $\widehat{\xi}^0(D)$ can be computed by (\ref{nk-soliton}) with the matrix $A$ given in Theorem~\ref{main-th}. Let us compute $A_I$.

\begin{Lemma}\label{AI}For $I\in \binom{[n]}{k+1}$ we have
\begin{gather}
A_I=\Xi \times \prod_{i\in I}\left(\frac{\prod\limits_{j=1}^l(\lambda_i-d_j)}{\prod\limits_{j=1}^k(\lambda_i-c_j)}
\frac{1}{\prod\limits_{r\neq i}^n(\lambda_i-\lambda_r)}\right)
\prod_{i,j\in I, i<j}(\lambda_i-\lambda_j),\label{AI-1}\\
\Xi=\prod_{i<j}^k(c_j-c_i)\prod_{j=1}^k\prod_{r=1}^n(c_j-\lambda_r).\label{Xi-1}
\end{gather}
\end{Lemma}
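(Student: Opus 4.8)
The plan is to compute the determinant $A_I=\det(a_{i_p,q})_{1\le p,q\le k+1}$ directly from the explicit form of $A$ given in Theorem~\ref{main-th}. Write $I=(i_1,\dots,i_{k+1})$. By definition each row of $A$ carries the common factor $D_{i}/\Lambda_{i}=\prod_{s=1}^l(\lambda_{i}-d_s)\big/\prod_{r\ne i}^n(\lambda_{i}-\lambda_r)$, so I would first pull these $k+1$ row factors out of the determinant. What remains is the $(k+1)\times(k+1)$ determinant whose $(p,q)$ entry is $C_{i_p,q}=\prod_{r\ne i_p}^n(c_q-\lambda_r)$ for $q\in[k]$, and whose last column is the all-ones vector.

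Next I would recognize this remaining determinant as (up to normalization) a Cauchy-type or generalized Vandermonde determinant. The key observation is that $C_{i,q}=\prod_{r\ne i}^n(c_q-\lambda_r) = \big(\prod_{r=1}^n(c_q-\lambda_r)\big)\big/(c_q-\lambda_i)$. Factoring out the column-dependent constants $\prod_{r=1}^n(c_q-\lambda_r)$ for $q\in[k]$ converts the first $k$ columns into the matrix with entries $1/(c_q-\lambda_{i_p})$, while the last column (all ones) can be thought of as the ``$c_{k+1}\to\infty$'' limit of a $(k+1)$st such column. Thus the reduced determinant is a bordered Cauchy determinant, and the standard Cauchy determinant evaluation gives
\begin{gather*}
\det\big(C_{i_p,q}\big)_{\text{with last column }\mathbf 1} = \Big(\prod_{q=1}^k\prod_{r=1}^n(c_q-\lambda_r)\Big)\cdot \frac{\prod_{p<p'}(\lambda_{i_{p'}}-\lambda_{i_p})\prod_{q<q'}(c_{q'}-c_q)}{\prod_{p=1}^{k+1}\prod_{q=1}^k(c_q-\lambda_{i_p})},
\end{gather*}
up to an overall sign that I would track carefully by comparing orderings. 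Combining this with the row factors $\prod D_{i_p}/\Lambda_{i_p}$ pulled out in the first step, and collecting the $I$-independent pieces into $\Xi=\prod_{i<j}^k(c_j-c_i)\prod_{j=1}^k\prod_{r=1}^n(c_j-\lambda_r)$, yields exactly \eqref{AI-1}--\eqref{Xi-1}.

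The main obstacle I anticipate is twofold: first, correctly identifying and evaluating the bordered Cauchy determinant (the ``column at infinity'' needs either a limiting argument or a direct cofactor expansion along the last column, each of which requires care with signs); and second, bookkeeping the signs and the distribution of factors between the $\Lambda_i$ in the denominator and the $\prod(\lambda_i-c_j)$ terms, since several Vandermonde products $\prod_{i<j}(\lambda_i-\lambda_j)$ versus $\prod_{i<j}(\lambda_j-\lambda_i)$ appear with opposite conventions in \eqref{AI-1} and in the definition of $\Delta_I$. I would resolve the sign issue by checking the formula on the smallest case $k=1$ (so $I=(i_1,i_2)$, a $2\times2$ determinant), where both sides can be written out explicitly, and then arguing that the general case follows from the Cauchy determinant identity with the same sign convention. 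The rest is routine algebra with symmetric functions, which I would not spell out in full.
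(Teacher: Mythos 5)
Your proposal matches the paper's proof: the paper likewise reduces $A_I$ to the bordered Cauchy determinant $\det\bigl(\frac{1}{c_q-\lambda_{i_p}}\,\big|\,\mathbf 1\bigr)$ after extracting the row factors $D_i/\Lambda_i$ and the column factors $\prod_{r=1}^n(c_q-\lambda_r)$, and evaluates it by the same ``Cauchy-like formula'' (whose correct sign convention is $\prod_{i<j}(\lambda_i-\lambda_j)$, as in \eqref{AI-1}). The approach and the bookkeeping are essentially identical.
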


The lemma can be proved using the following Cauchy like formula which is easily proved:
\begin{gather*}
\left|\begin{matrix}
\frac{1}{c_1-\lambda_1}&\ldots&\frac{1}{c_k-\lambda_1}&1\\
\vdots&\quad&\vdots&\vdots\\
\frac{1}{c_1-\lambda_{k+1}}&\ldots&\frac{1}{c_k-\lambda_{k+1}}&1\\
\end{matrix}
\right|
=\frac{\prod\limits_{i<j}^k(c_j-c_i)\prod\limits_{i<j}^{k+1}(\lambda_i-\lambda_j)} {\prod\limits_{i=1}^k\prod\limits_{j=1}^{k+1}(c_i-\lambda_j)}.
\end{gather*}

We assign weight $i$ to $x_i$. Then

\begin{Corollary}\label{limit-tau-formula}\quad
\begin{enumerate}\itemsep=0pt
\item[$(i)$] We have
\begin{gather}
\tau\big(x;\tilde{\xi}^0(D)\big) = c'\exp\left(\sum_{i=1}^\infty(-P_i(\lambda_1,\dots ,\lambda_n)+P_i(d_1,\dots ,d_l))x_i\right)\nonumber\\
\hphantom{\tau\big(x;\tilde{\xi}^0(D)\big) =}{} \times
\sum_{I\in \binom{[n]}{k+1}} \prod_{i\in I}\left(\frac{\prod\limits_{j=1}^l(\lambda_i-d_j)}{\prod\limits_{j=1}^k(\lambda_i-c_j)}
\frac{1}{\prod\limits_{r\neq i}^n(\lambda_i-\lambda_r)}\right)\exp\left(\sum_{i\in I}\eta_i\right),\label{formula-final}
\end{gather}
where
\begin{gather*}
c'=\frac{\Xi}{\prod\limits_{i<j}^{m_0}(c_j-c_i)},
\end{gather*}
and $\Xi$ is given by \eqref{Xi-1}.

\item[$(ii)$] The following expansion holds,
\begin{gather*}
\tau\big(x;\tilde{\xi}^0(D)\big)=p_{g-m_0}(x)+\cdots,
\end{gather*}
where $\cdots$ part contains only terms with the weights greater than $g-m_0$.
\end{enumerate}
\end{Corollary}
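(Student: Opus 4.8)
The plan is to derive part~(i) by combining \eqref{relation-tau-2} with the explicit soliton formula. By \eqref{relation-tau-2}, $\tau\big(x;\tilde{\xi}^0(D)\big)$ equals $\big(\prod_{1\leq i<j\leq m_0}(c_j-c_i)\big)^{-1}$ times the exponential gauge factor $\exp\big(\sum_{j\geq 1}(-P_j(\lambda_1,\dots,\lambda_n)+P_j(d_1,\dots,d_l))x_j\big)$ times $\tau\big(x;\widehat{\xi}^0(D)\big)$; and by Theorem~\ref{main-th} together with the definition of $\widehat{\xi}^0(D)$, the last factor is the $(n,k+1)$ soliton attached to the matrix $A$, hence by \eqref{nk-soliton} it equals $\sum_{I\in\binom{[n]}{k+1}}\Delta_I(\lambda_1,\dots,\lambda_n)\,A_I\,\exp\big(\sum_{i\in I}\eta_i\big)$. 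Inserting the closed form for $A_I$ from Lemma~\ref{AI} into this sum reduces the statement to an identity between products of linear factors in the $\lambda_i$, $c_j$, $d_j$.

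To carry this out I would multiply the Vandermonde $\Delta_I(\lambda_1,\dots,\lambda_n)$ coming from \eqref{nk-soliton} against the two $\lambda$-products occurring in \eqref{AI-1}, namely $\prod_{i,j\in I,\,i<j}(\lambda_i-\lambda_j)$ and $\prod_{i\in I}\prod_{r\neq i}^{n}(\lambda_i-\lambda_r)^{-1}$. The factors in which both indices lie in $I$ cancel, up to a sign $(-1)^{\binom{k+1}{2}}$ that must be tracked carefully, against the square of the Vandermonde of $\{\lambda_i\}_{i\in I}$, and what survives is exactly the symmetric product $\prod_{i\in I}\big(\prod_{j=1}^l(\lambda_i-d_j)\,\prod_{j=1}^k(\lambda_i-c_j)^{-1}\,\prod_{r\neq i}^n(\lambda_i-\lambda_r)^{-1}\big)$ displayed in \eqref{formula-final}. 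The $I$-independent prefactor $\Xi$ of \eqref{Xi-1}, divided by the normalization $\prod_{1\leq i<j\leq m_0}(c_j-c_i)$ brought in from \eqref{relation-tau-2}, is exactly the constant $c'$; the exponential factor is carried through unchanged. This yields \eqref{formula-final}.

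Part~(ii) I would deduce from Proposition~\ref{tau-exp-nonsing} by continuity. Since the expansion coefficients of $\tilde{\xi}(D)$ are polynomials in $\{\lambda_j\}$, all its Pl\"ucker coordinates, and hence all coefficients of the Schur expansion of $\tau\big(x;\tilde{\xi}(D)\big)$, converge as $\lambda_{j+n}\to\lambda_j$, so $\tau\big(x;\tilde{\xi}^0(D)\big)$ is the coefficientwise limit of the expansion in Proposition~\ref{tau-exp-nonsing}. There the coefficient of $s_{(g-m_0)}=p_{g-m_0}$ is the constant $1$, so it remains $1$ in the limit, while the coefficient of every partition of weight at most $g-m_0$ different from $(g-m_0)$ already vanishes in Proposition~\ref{tau-exp-nonsing}, and every other partition $\mu$ occurring there has $\mu_1\geq g-m_0$ and $\mu\neq(g-m_0)$, hence $|\mu|>g-m_0$, i.e.\ of weight $>g-m_0$; these features are preserved under the limit. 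Hence $\tau\big(x;\tilde{\xi}^0(D)\big)=p_{g-m_0}(x)+\cdots$ with $\cdots$ of weight $>g-m_0$.

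The main obstacle is the product-and-sign bookkeeping in part~(i): one must keep the Vandermonde factors, the quantities $\Lambda_i$, $D_i$, $C_{i,j}$ of Theorem~\ref{main-th}, the constant $\Xi$, and the sign $(-1)^{\binom{k+1}{2}}$ mutually consistent so that the $\Delta_I$ of \eqref{nk-soliton} and the products of \eqref{AI-1} collapse to the symmetric form of \eqref{formula-final}. The one genuinely computational ingredient, the determinant evaluation behind \eqref{AI-1}, is isolated in Lemma~\ref{AI} and rests on the Cauchy-like identity stated there, so it does not reappear here; part~(ii) requires nothing beyond the observation that the leading coefficient is a nonzero constant and therefore survives the degeneration.
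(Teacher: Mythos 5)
Your argument is exactly the paper's: part~(i) is obtained by chaining \eqref{relation-tau-2}, Theorem~\ref{main-th} with Theorem~\ref{S-frame}, the soliton formula \eqref{nk-soliton}, and Lemma~\ref{AI}, and part~(ii) is read off from Proposition~\ref{tau-exp-nonsing} by passing to the limit of the Pl\"ucker coordinates. The paper states this chain without spelling out the Vandermonde cancellation or the continuity of the Schur expansion, so your write-up is simply a more detailed version of the same proof.
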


\begin{proof} (i) It follows from Theorems \ref{main-th} and~\ref{S-frame}, Lemma \ref{AI}, (\ref{nk-soliton}), (\ref{relation-tau-2}).

(ii) It follows from Proposition \ref{tau-exp-nonsing}.
\end{proof}

Next we examine the conditions of the regularity of $\tau\big(x;\tilde{\xi}^0(D)\big)$. Hereafter we assume that $\lambda_i$, $c_i$, $d_i$, $x_i$ are real for all possible $i$ and that
\begin{gather*}
\lambda_1<\cdots<\lambda_n.
\end{gather*}
By Corollary \ref{limit-tau-formula}(ii) if $m_0<g$, $\tau\big(x;\tilde{\xi}^0(D)\big)$ becomes singular since it vanishes at $x={}^t(0,0,\ldots)$. So let us consider the case $m_0=g$. Then $k+l=g=n-1$. In this case the tau function~(\ref{formula-final}) is the same as that studied in~\cite{Abenda} as mentioned in Remark~\ref{remark2}. Thus the following proposition is proved in~\cite{Abenda}.

\begin{Proposition}\label{positive}
If there exists a permutation $w$ of $\{1,2,\dots ,n\}$ such that
\begin{gather*}
\lambda_1<c_{w(1)}<\lambda_2<c_{w(2)}<\cdots<c_{w(n-1)}<\lambda_n,
\end{gather*}
then the sign of $A_I$ does not depend on $I\in \binom{[n]}{k+1}$.
\end{Proposition}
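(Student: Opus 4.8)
The plan is to read off the sign of $A_I$ directly from the closed formula of Lemma~\ref{AI}. Since the prefactor $\Xi$ appearing in (\ref{AI-1}) is given by (\ref{Xi-1}) and does not depend on $I$, it suffices to prove that
\[
\operatorname{sign}\!\left(\prod_{i\in I}\left(\frac{\prod_{j=1}^l(\lambda_i-d_j)}{\prod_{j=1}^k(\lambda_i-c_j)}\,\frac{1}{\prod_{r\neq i}^n(\lambda_i-\lambda_r)}\right)\prod_{i,j\in I,\,i<j}(\lambda_i-\lambda_j)\right)
\]
is independent of $I\in\binom{[n]}{k+1}$. All the factors occurring here are nonzero: the strict inequalities in the hypothesis make the $\lambda_i$, the $c_j$ and the $d_j$ pairwise distinct, so in particular each $\operatorname{sign}(A_I)$ and $\operatorname{sign}(\Xi)$ is well defined and nonzero.

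First I would set up the sign bookkeeping dictated by the interlacing. We are in the case $m_0=g=n-1$, so $k+l=n-1$ and the combined collection $\{c_1,\dots,c_k,d_1,\dots,d_l\}$ has exactly $n-1$ elements; list them in increasing order as $\mu_1<\mu_2<\cdots<\mu_{n-1}$. The hypothesis says precisely that $\lambda_s<\mu_s<\lambda_{s+1}$ for every $s\in[n-1]$, and hence that, for $i\in[n]$, one has $\lambda_i-\mu_s>0$ exactly when $s\le i-1$ and $\lambda_i-\mu_s<0$ exactly when $s\ge i$. Since the indices $s$ carried by the $c_j$'s and by the $d_j$'s together exhaust $\{1,\dots,n-1\}$, multiplying over the whole collection removes any dependence on which of the $\mu_s$ are $c$'s and which are $d$'s:
\[
\operatorname{sign}\!\left(\frac{\prod_{j=1}^l(\lambda_i-d_j)}{\prod_{j=1}^k(\lambda_i-c_j)}\right)=\prod_{s=1}^{n-1}\operatorname{sign}(\lambda_i-\mu_s)=(-1)^{\,n-i}.
\]
The same count applied to the $\lambda$'s themselves — here $\lambda_i-\lambda_r<0$ for exactly the $n-i$ indices $r>i$ — gives $\operatorname{sign}\!\big(\prod_{r\neq i}^n(\lambda_i-\lambda_r)\big)=(-1)^{\,n-i}$ as well.

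The decisive point is then that these two contributions cancel for each $i\in I$: the factor in the $i$-th bracket has sign $(-1)^{n-i}\cdot(-1)^{n-i}=1$, so the entire product over $i\in I$ is positive, for every $I$. What survives is $\prod_{i,j\in I,\,i<j}(\lambda_i-\lambda_j)$, a product of $\binom{k+1}{2}$ strictly negative numbers, of sign $(-1)^{\binom{k+1}{2}}$, again the same for all $I$. Altogether
\[
\operatorname{sign}(A_I)=\operatorname{sign}(\Xi)\cdot(-1)^{\binom{k+1}{2}}\qquad\text{for every }I\in\binom{[n]}{k+1},
\]
which is the assertion. I do not expect a genuine obstacle: the argument is essentially this sign count, and the only point requiring care is the observation in the middle paragraph that the sign of the ratio $\prod_j(\lambda_i-d_j)/\prod_j(\lambda_i-c_j)$ depends only on the full interlacing pattern $\{1,\dots,n-1\}$ and not on how it is split between $c$-positions and $d$-positions; once this is seen, the cancellation against $\prod_{r\neq i}^n(\lambda_i-\lambda_r)$ is immediate.
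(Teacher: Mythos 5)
Your argument is correct, but it is worth noting that the paper does not actually prove this proposition itself: it observes (Remark~\ref{remark2}) that for $m_0=g$ the tau function \eqref{formula-final} coincides with the one studied by Abenda and then simply cites \cite{Abenda} for the sign statement. Your route is therefore genuinely different --- a self-contained sign count read off from the explicit formula for $A_I$ in Lemma~\ref{AI}. The bookkeeping checks out: with $k+l=n-1$ the union $\{c_1,\dots,c_k,d_1,\dots,d_l\}$ is the full interlacing set $\mu_1<\cdots<\mu_{n-1}$, the interlacing gives $\operatorname{sign}\prod_{s}(\lambda_i-\mu_s)=(-1)^{n-i}$, and your key observation --- that the sign of the ratio $\prod_j(\lambda_i-d_j)/\prod_j(\lambda_i-c_j)$ equals the sign of the product over the whole set, hence is insensitive to how the set is split between numerator and denominator --- is exactly what makes the cancellation against $\operatorname{sign}\prod_{r\neq i}(\lambda_i-\lambda_r)=(-1)^{n-i}$ work. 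The residual factor $\prod_{i<j,\,i,j\in I}(\lambda_i-\lambda_j)$ has the $I$-independent sign $(-1)^{\binom{k+1}{2}}$, and $\Xi$ is manifestly independent of $I$, so $\operatorname{sign}(A_I)$ is constant. One cosmetic point: the hypothesis as stated speaks of a permutation of $\{1,\dots,n\}$ while only $w(1),\dots,w(n-1)$ appear; you correctly read this as the interlacing of the $n-1$ points $c_1,\dots,c_{n-1}$ with the $\lambda$'s, which is the intended meaning. What your approach buys is independence from \cite{Abenda} and transparency about where the positivity comes from; what the paper's citation buys is the link to the totally nonnegative Grassmannian framework in which this sign condition is naturally situated.
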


Proposition \ref{positive} means that $\tau\big(x;\tilde{\xi}^0(D)\big)$ is positive for all $x_1,x_2,\dots$, if it is multiplied by some constant and the solution $u=2\partial_{x}^2 \log \tau\big(x;\tilde{\xi}^0(D)\big)$ of the KP equation~(\ref{KP-equation}) has no singularity.

Finally we give explicitly the adjoint wave function, which we denote by $\Psi^0(x;z)$, correspon\-ding to the tau function in Corollary~\ref{limit-tau-formula}. The result is
\begin{gather*}
\Psi^0(x;z)=\frac{1}{ \Phi(x) } \frac{ \sum\limits_I \Delta_I A_I \left(\prod\limits_{i\in I^c}(1-\lambda_i z)\right){\rm e}^{\sum\limits_{i\in I} \eta_i} }{ \prod\limits_{j=1}^l(1-d_jz) }
\left(-\sum\limits_{i=1}^\infty x_i z^{-i}\right),
\end{gather*}
where $I^c$ denotes the complement of $I$ in $[n]$, $\Delta_I=\Delta_I(\lambda_1,\dots ,\lambda_n)$, $A_I$ is given by~(\ref{AI-1}) and $\Phi(x)$ is the part of $\tau\big(x;\tilde{\xi}^0(D)\big)$ which is obtained by removing the part in front of the sum symbol (the constant and the exponential function). The function $\Psi^0(x;z)$ is the expression of the limit of $\Psi(x;z)$ near $\infty_+$.

Notice that the poles at $p_1,\dots ,p_k$ of $\Psi(x;z)$ disappear in the limit. This is possible because we consider the reducible degeneration of the curve $X$.

\subsection*{Acknowledgements}
The author would like to thank Simonetta Abenda, Yasuhiko Yamada for useful discussions. He is also grateful to Kanehisa Takasaki for comments on the manuscript of the paper and to Yuji Kodama for many inspiring questions and comments. This work is supported by JSPS Grants-in-Aid for Scientific Research No.15K04907.

\pdfbookmark[1]{References}{ref}
\LastPageEnding

\end{document}